\newtheorem{theorem}{Theorem}
\newtheorem{lemma}{Lemma} 
\newtheorem{assumption}{Assumption}
\newtheorem{definition}{Definition} 
\newtheorem{proposition}{Proposition} 
\newtheorem{remark}{Remark}
\newcommand{\R}{\mathbb{R}} 
\newcommand{\node}{\mathcal{V}}     
\newcommand{\edge}{\mathcal{E}}     
\newcommand{\gd}{\mathcal{G}_t}     
\newcommand{\bd}{\mathcal{B}_t}     
\newcommand{\nei}{\mathcal{N}}      
\newcommand{\kg}{K_\mathcal{G}}
\newcommand{\kb}{K_\mathcal{B}} 
\newcommand{\linf}{L_\infty}
\newcommand{\winf}{W_\infty}
\newcommand{\dout}{D^\text{out}} 
\newcommand{\degInMax}{d^{\text{in}}_{\text{max}}}
\newcommand{\degOutMax}{d^{\text{out}}_{\text{max}}}
\newcommand{\ts}[1]{\theta^*(#1)}
\newcommand{\tti}{t \rightarrow \infty}
\newcommand{\gmo}[1]{ \gamma_1 (#1) }
\newcommand{\gmt}[1]{ \gamma_2 (#1) }
\newcommand{\nrm}[1]{\Vert #1 \Vert} 
\newcommand{\nrmsq}[1]{\Vert #1 \Vert^2}
\newcommand{\bbc}[1]{\big( #1 \big)} 
\newcommand{\Bbc}[1]{\Big( #1 \Big)} 
\newcommand{\one}{\mathbf{1}} 
\newcommand{\zro}{\mathbf{0}}
\newcommand{\J}{\frac{1}{N}\mathbf{1}\mathbf{1}^T}
\def\BibTeX{{\rm B\kern-.05em{\sc i\kern-.025em b}\kern-.08em
    T\kern-.1667em\lower.7ex\hbox{E}\kern-.125emX}}
\begin{document}
\title{Distributed Estimation over Directed Graphs Resilient to Sensor Spoofing}
\author{Shamik Bhattacharyya, Kiran Rokade, and Rachel Kalpana Kalaimani
\thanks{This work has been partially supported by DST-INSPIRE Faculty Grant, Department of Science and Technology (DST), Govt. of India (ELE/16-17/333/DSTX/RACH)}
\thanks{Shamik Bhattacharyya is with the Electrical Engineering Department, Indian Institute of Technology Madras, TN 600036, India (e-mail: ee18d005@smail.iitm.ac.in). }
\thanks{Kiran Rokade is with the Electrical and Computer Engineering Department, Cornell University, Ithaca, NY 14850, USA (e-mail: kvr36@cornell.edu).}
\thanks{Rachel K. Kalaimani is with 
the Electrical Engineering Department, Indian Institute of Technology Madras, TN 600036, India (e-mail: rachel@ee.iitm.ac.in).}}

\maketitle

\begin{abstract}
This paper addresses the problem of distributed estimation of an unknown dynamic parameter by a multi-agent system over a directed communication network in the presence of an adversarial attack on the agents' sensors. The mode of attack of the adversaries is to corrupt the sensor measurements of some of the agents, while the communication and information processing capabilities of those agents remain unaffected. To ensure that all the agents, both normal as well as those under attack, are able to correctly estimate the parameter value, the Resilient Estimation through Weight Balancing (REWB) algorithm is introduced. The only condition required for the REWB algorithm to guarantee resilient estimation is that at any given point in time, less than half of the total number of agents are under attack. The paper discusses the development of the REWB algorithm using the concepts of weight balancing of directed graphs, and the consensus+innovations approach for linear estimation. Numerical simulations are presented to illustrate the performance of our algorithm over directed graphs under different conditions of adversarial attacks.
\end{abstract}

\begin{IEEEkeywords}
Directed Graphs, Distributed Estimation, Resilient Consensus, Weight-balancing
\end{IEEEkeywords}

\section{Introduction}
\label{sec:introduction}
\IEEEPARstart{T}{he} advancement in wireless sensor networks (WSNs) has diversified their areas of  application to agriculture \cite{AppAgri}, healthcare \cite{AppHlth}, and renewable energy \cite{AppREnrg} to name a few. As a result the scale and complexity of the networks is also on the rise \cite{SenSrvey}, \cite{WSNSrvey}. This necessitates the use of more distributed approaches to signal processing over WSNs, and distributed estimation is a key aspect of it. Distributed estimation is about determining a parameter of interest locally at each sensor node with cooperation between neighboring nodes \cite{DEstiSrvey}. The increase in areas of application of WSNs has in turn made them more vulnerable to adversarial attacks \cite{Security}. A major mode of such attacks are aimed to manipulate the normal functioning of the sensor nodes and thus disrupt the overall signal processing capability of the WSNs. Some commonly used threat models are Byzantine \cite{Thrt_Byz}, malicious \cite{Thrt_Mal}, sensor spoofing \cite{SenSpoof_TAC16}, etc. Hence the distributed estimation algorithms need to be resilient to adversarial attacks in order to be more effective.

Different consensus algorithms resilient to adversarial attacks appear in the literature such as Mean-Subsequence-Reduced algorithm \cite{Dbji_Atc17} and Median Consensus Algorithm \cite{MCA}. These algorithms ensure consensus only for the normal agents, while the agents under attack may have arbitrary values. We are interested in a resilient distributed estimation algorithm, that will ensure that both the normal agents and the agents under attack can reach consensus over the true value of the parameter to be estimated. 
The \emph{consensus+innovations} approach illustrated in \cite{KarSPMag13} uses the consensus framework to design resilient algorithms for linear estimation. The Constant weight Saturated Innovation Update (CSIU) algorithm \cite{KarCDC18} is one such resilient estimation algorithm which ensures that all the agents are able to estimate the parameter of interest, provided less than three-tenth of the total agents are under attack. This was further improved in \cite{KarTAC19}, where the Saturated Innovation Update (SIU) algorithm ensures all the agents' estimate converge to the desired parameter value provided the adversaries attack less than half of the total agents. Also in \cite{KarCDC18}, a new term \emph{resilience index} was used to provide a bound for the fraction of sensor nodes under attack. 

Both the CSIU and SIU algorithms are designed on undirected graphs representing bidirectional communication links between the agents. In many practical scenarios, the power levels at which sensor nodes broadcast information or, their interference and noise patterns, differ from node to node \cite{DiGrph}, \cite{BalWts}. The communication between nodes in such cases is unidirectional which is aptly represented by a directed graph. Here we consider a time-invariant network topology with unidirectional communication links between agents. To the best of our knowledge, an extension of the consensus+innovations approach to directed graphs is non-existent in the literature, except for the recent work \cite{Meng}. However, we observed that for the algorithm presented in \cite{Meng}, choosing appropriate parameters is not an easy task. In contrast, we propose an algorithm which guarantees convergence over a given range of parameter values. Also, unlike \cite{Meng}, where  the set of adversarial agents and the unknown parameter are fixed, our proposed algorithm works even when the set of agents under attack and the unknown parameter changes with time.

 The model of attack by the adversaries is designed on the idea of sensor spoofing \cite{SenSpoof_TAC16} where the sensor readings of the agents under attack are corrupted through data falsification or false data injection. Note that such an attack on the agents is restricted to their sensors. In particular, the agents under attack can perform computations and communicate with their neighbours. Also the agents under attack by the adversaries are not known a-priori by the normal agents. Moreover we allow for a more general scenario where the adversaries may attack different agents over time. We present an algorithm, Resilient Estimation through Weight Balancing (REWB), which ensures that all agents asymptotically converge to the value to be estimated provided less than half of the total number of agents are affected by adversaries. The agents operate in a distributed manner using only the local information available to them. The main contribution of this paper is the proposed REWB algorithm which ensures that over a directed communication network, both the normal agents as well as the agents under attack asymptotically estimate the actual value of the unknown parameter in the presence of a sensor spoofing attack by the adversaries. 

Technically, the contributions we make in this paper can be summarized as follows: 
\begin{itemize} 
\item We propose a novel REWB algorithm that estimates an unknown time-varying parameter with a decaying bound on its variations in the presence of sensor spoofing attacks by simultaneously balancing the unbalanced directed communication network (Algorithm \ref{alg:rewb}).
The REWB algorithm brings together the weight-balancing and consensus+innovation approaches over relative time-scales to achieve this.
\item We show that the proposed REWB algorithm ensures convergence of each agent, both normal as well as those under attack, to the actual value of the unknown parameter provided less than half of the total agents are under attack at any given time (Theorem \ref{thm}). 
\item As an intermediate result, we provide an explicit rate of convergence of the Laplacian of an unbalanced weighted digraph to the Laplacian of the associated balanced digraph (Lemma \ref{lem:Lt_Linf}). 
\end{itemize}

\textit{Notations.} $\R$ denotes the set of \emph{real} numbers, and $\R^N$ represents the $N$-dimensional Euclidean space. For any set $\mathcal{S}$, the \textit{cardinality} of the set is denoted by $|\mathcal{S}|$. $\one \coloneqq (1,1,\hdots,1)$ and $\zro \coloneqq (0,0,\hdots,0)$, of appropriate dimensions. For a real-valued vector $v$, $v^T$ denotes the \emph{transpose} of the vector, $||v||$ denotes its $l_2$-norm and $||v||_\infty$ denotes its $\infty$-norm. Similarly for a real-valued matrix $M$, $M^T$ denotes the \emph{transpose} of the matrix, and $||M||$ denotes its \emph{spectral norm}. Among the \emph{eigenvalues} of $M$, $\lambda_2(M)$ represents the \emph{second lowest} eigenvalue of $M$ in magnitude, while $\lambda_{\text{max}}(M)$ denotes its largest eigenvalue in magnitude. For a real-valued vector $v$, diag$(v)$ represents a diagonal matrix with $v$ as the main diagonal.  

The rest of the paper is organised as follows. Section-\ref{sec:PF} discusses the details of the problem such as the inter-agent communication network, the threat model of the adversaries and the concept of resilience index. Section-\ref{sec:Rslt} starts with the development of the REWB algorithm using the weight-balancing approach, followed by the details of the algorithm, finally leading to our main result. Some numerical simulations are presented in Section-\ref{sec:SimRes} to validate the performance of the REWB algorithm. Finally the conclusions are presented in Section-\ref{sec:Con}.

\section{Problem Formulation} \label{sec:PF}

\subsection{System Model} \label{sec:PF_sysMdl}
Consider a system of $N$ agents where each agent is equipped with sensing, computing and communication capabilities - it can record measurements using its sensor, can perform computations using its own data and the information received from its neighbouring agents, and can also share its data with the neighbours. The aim of each agent is to estimate an unknown parameter $\ts{t} \in \R^M$ in a distributed manner even while some agents' sensor measurements are corrupted by adversaries. The precise model of sensor measurement corruption will be described shortly.

The communication among the agents is modelled as a directed graph $\Gamma = (\node,\edge)$, where the vertex set $\node = \{ 1, 2, \hdots, N \}$ represents the set of $N$ agents. The set of directed edges $\edge \subset \node \times \node $ represents the information exchange links between the agents, where $(i,j) \in \mathcal{E}$ if agent $j$ can send information to agent $i$. A \emph{directed path} from $i$ to $j$ is the sequence of directed edges $(i,i_1),(i_1,i_2),\hdots,(i_p,j)$. The set of \textit{in-neighbours} of agent $j$ is defined as $\mathcal{N}_j = \{ i \in \node : (j,i) \in \edge \}$, and the corresponding \textit{in-degree} is denoted as $d_j^{\textrm{in}} = |\mathcal{N}_j|$. The set of \textit{out-neighbours} of agent-$j$ is defined as $\mathcal{O}_j = \{ i \in \mathcal{V} : (i,j) \in \mathcal{E} \}$, and the corresponding \textit{out-degree} is denoted as $d_j^{\text{out}} = |\mathcal{O}_j|$. A corresponding diagonal matrix is defined as $D^{\text{out}} = \text{diag}\bbc{d_1^{\text{out}}, \hdots, d_N^{\text{out}}}$. The \emph{adjacency matrix}, $A$ is a square matrix of size $N \times N$ defined as $A = [a_{ij}]$ where $a_{ij} = 1$ if $(i,j) \in \edge$, and $a_{ij} = 0$ otherwise. The \emph{Laplacian}, $L$ is defined as $L \coloneqq \dout - A$. 

\begin{definition} [\textbf{Strongly Connected Graph }]
A directed graph is said to be \emph{strongly connected} if there exists a directed path between every pair of vertices in the graph. 
\end{definition}

The flow of information is such that each agent $i$ is able to receive information from its in-neighbours $(\mathcal{N}_i)$, and send its own data to its out-neighbours $(\mathcal{O}_i)$. So the information about any agent $i$ can be received by another agent $j$ either directly if a directed communication link exists between them, or indirectly via intermediate agent(s) provided the corresponding directed path exists. In order to ensure that the information about every agent $i$ reaches every other agent $j, (i \neq j; i,j \in \node)$, we introduce the following assumption.

\begin{assumption} \label{asmp:stCnctvty}
The directed graph $\Gamma$ is \textit{Strongly Connected}. 
\end{assumption}

Now we proceed to model the effect of the adversaries, which attack the agents with a motive to disrupt the estimation process thus trying to prevent them from correctly estimating the value $\ts{t}$. At every time-step $t \geq 0$, the agents which are under attack by the adversaries are termed as the the set of \textit{Bad} (or \textit{affected}) agents, denoted as $\bd$. The remaining agents form the set of \textit{Good} (or \textit{normal}) agents, denoted as $\gd$. The set of bad agents can vary with time, and are also not known a-priori to the set of good agents. So for each $t \geq 0$, the set $\node$ is partitioned into $\gd$ and $\bd$. Thus $\gd \cup \bd = \node, \forall $ $t \geq 0$. 
The attack model of the adversary is sensor spoofing attacks. Here the adversary introduces spurious signals into the sensor readings non-invasively \cite{SenSpoof_TAC16}. The corruption of sensor readings remains undetected by commonly used filters \cite{SenSpoof_ICCD20}. So even after nullifying the noise in sensor readings, the effect of the spoofing attack would still percolate into the measurements available to the agent. The agents use these sensor measurements to estimate the unknown parameter. In order to specifically highlight the effect of the adversary, we consider the sensor measurements available to the agents to be free of the effect of any measurement noise. The sensor measurements available to the agents under attack are arbitrary values manipulated by the adversary. Accordingly, we model the sensor measurements recorded by the agents as 
\begin{equation} \label{eq:par_yi} 
    \begin{split}
        \text{for all } i \in \gd &\text{ , } y_i(t) = \ts{t} \\
        \text{for all } i \in \bd &\text{ , } y_i(t) = \ts{t} + \zeta_i(t)
    \end{split}
\end{equation}
where $\zeta_i(t) \in \R^M$ is a vector of arbitrary values reflecting the effect of the adversaries. In the above model there is no boundedness assumption or stochastic approximation considered for $\zeta_i(t)$. This preserves the arbitrary nature of the data being manipulated by the adversary. So, if $|\bd| = 0$ $\forall t \geq 0$, then $y_i(t) = \ts{t}$ $\forall i \in \node$, and the estimation problem would be trivial as the sensor measurements directly provide the correct value of the parameter. Here we are interested in the non-trivial case where there exists some $t \geq 0$ such that $|\bd| \neq 0$. This means some of the sensor measurements would be corrupted as $y_i(t) = \ts{t} + \zeta_i(t)$ $\forall i \in \bd$. So each agent needs to perform some additional computations in order to estimate the true value of $\ts{t}$ in a distributed manner. It should be noted that under this threat model, the bad agents are still able to perform  their computations as per design as well as communicate with their neighbours.

The unknown time-varying parameter that is to be estimated is some physical quantity which can be measured by a sensor. So we can safely assume its Euclidean norm to be bounded. Moreover, we also assume that the variations in the unknown parameter asymptotically decay with time.

\begin{assumption} \label{asmp:thtaBnd}
The Euclidean norm of the unknown vector quantity that is to be estimated lies within an upper bound known to each agent : 
\begin{equation} \label{eq:thtaBnd}
     \nrm{\ts{t}}  \leq \Theta
\end{equation}
Also, the Euclidean norm of the variation in the unknown vector quantity has a decaying bound : 
\begin{equation} \label{eq:thtaVarBnd}
     \nrm{\ts{t+1} - \ts{t} } \leq 1 / (1+t)^{\theta_1}  
\end{equation}
\end{assumption}
As a consequence of \eqref{eq:thtaVarBnd}, we have the time varying parameter $\ts{t}$ eventually converging to some constant value $\hat{\theta}$. Specifically, $\ts{t} \to \hat{\theta} \text{ as } t \to \infty$.

\textit{Remark :} The above assumption focuses on a particular subset of dynamic parameter estimation. Note that this is a modest extension from the \textit{static} parameter estimation case. 

Now to estimate $\ts{t}$ in a distributed manner, for all $t \geq 0$ each agent $i$ maintains its own estimate of $\ts{t}$ denoted by $x_i(t) \in \R^M$, also referred to as the \emph{state} of agent $i$. In order to update the state, each agent $i$ follows the discrete-time single integrator dynamics :
\begin{equation} \label{eq:SIDyn}
    x_i(t+1) = x_i(t) + u_i(t) , t \geq 0
\end{equation}
So at every time step, each agent $i$ performs the following steps in the given sequence : 
\begin{enumerate}
    \item[$S1$ -] broadcasts its own estimate $x_i(t)$ to its out-neighbours $\mathcal{O}_i$
    \item[$S2$ -] receives the estimates from its corresponding in-neighbours : $x_j(t)$, $j \in \mathcal{N}_i$ 
    \item[$S3$ -] collects sensor measurement of $\ts{t}$ : $y_i(t)$
    \item[$S4$ -] updates its own estimate following \eqref{eq:SIDyn}, where $u_i(t) = f(y_i(t), \{ x_j(t), j \in \mathcal{N}_i \})$, and $f$ is defined later in Section \ref{sec:Rslt_Algo}.
\end{enumerate} 

In a distributed estimation problem with $x_i(t)$ as the state of agent $i$ and $\ts{t}$ as the parameter of interest, the aim is to achieve  
\begin{equation} \label{eq:Aim}
    x_i(t) \longrightarrow \ts{t} \text{ as } \tti \text{ , for all } i \in \node
\end{equation} 
For the \emph{resilient} estimation problem considered here, the additional challenge is to achieve \eqref{eq:Aim} even in the presence of adversaries attacking some of the agents. In order to quantify how resilient an algorithm is to the adversarial attacks, we use a measure called the \textit{Resilience Index} \cite{KarTAC19}. The resilience index $(s)$ is an upper bound on the fraction of agents which are under attack by the adversaries at any time-step $t$. So, $s \geq \frac{|\bd|}{N}$ for all $t \geq 0$, $s \in \R$. Thus $s = 0$ would indicate the trivial case where bad agents are totally absent. Having $s=1$ allows for the possibility of all the agents being under attack at any time-step $t$.

In the sequel, we initially proceed to design an algorithm which provides us with a suitable value of $u_i(t)$ $\forall t \geq 0$, for all  $i \in \node$ introduced in \eqref{eq:SIDyn}. Then we present our main result on how the newly designed algorithm, under the assumptions made so far, achieves \eqref{eq:Aim}. 

\section{Results} \label{sec:Rslt} 
The aim of each agent in the multi-agent system under consideration, is to estimate an unknown static parameter in a distributed manner, as given in \eqref{eq:Aim}. 
The technique used for the distributed estimation of $\ts{t}$ is based on the consensus+innovations approach \cite{KarSPMag13}. Based on this approach we proceed to design an algorithm such that the desired objective, $x_i(t) \longrightarrow \ts{t} \text{ as } \tti \text{ , for all } i \in \node$, is achieved through fulfilling the following two smaller goals simultaneously as $\tti$: 
\begin{enumerate}
    \item[$G1$ :] the state of each agent, $x_i(t)$, approaches the average of the states of all agents, $\Bar{x}(t) \coloneqq (1/N) \sum_{i=1}^N x_i(t) $ 
    \item[$G2$ :] $\Bar{x}(t)$ approaches the unknown value to be estimated, $\ts{t}$.
\end{enumerate} 

\subsection{Weight Balancing} \label{sec:Rslt_WB} 

In a Multi-Agent System (MAS), the communication network is usually modelled as a graph, with the nodes of the graph representing the agents and the edges between the nodes representing the corresponding communication links between the agents. When the flow of information between agents is \emph{bi-directional}, the model used is an \emph{undirected} graph. On the other hand, when the flow of information between agents is \emph{unidirectional}, a \emph{directed} graph (or \emph{digraph}) is required to model it. The directed edges of the digraph represent the unidirectional communication links while preserving their direction of information flow. A \emph{weighted} graph has each of its edges assigned a real or integer value, referred to as \emph{edge weights}. Unless specifically mentioned, the edge weights are taken as unity.

In case of an undirected graph, the sum of edge weights of the incoming edges is equal to the sum of the edge weights of the outgoing ones. But this balance in edge weights does not necessarily hold true in the case of a digraph. To overcome this imbalance, we need to find a suitable \emph{weight vector} $w \in \R^N$, where each outgoing edge of agent $i$ is assigned the weight $w_i$. These weights are said to \emph{balance the graph} if $w_i d_i^{\text{out}} = \sum_{j \in \mathcal{N}_i} w_j$. The notion of a balanced graph is formally defined below. 
\begin{definition} [\textbf{Balanced Graph}] \label{def:BalWts}  
A graph $\Gamma$ of $N$ nodes is said to be \emph{balanced} if there exists $w \in \R^N$ such that  
\begin{equation} \label{eq:BalWts}
    L \one = \zro \text{ , } \one^T L = \zro^T
\end{equation}
where $L \coloneqq (\dout-A) \text{diag} (w)$
\end{definition}

The weights $\bbc{w_1,w_2,\hdots,w_N}$ which balance a given digraph are called the \emph{balancing weights} of the corresponding digraph \cite{BalWts}. Note that an undirected graph is inherently balanced with $w=\one$ as the vector of balancing weights. On the other hand, for a strongly connected digraph the vector of balancing weights is non-trivial. Note that this vector of balancing weights is also unique to the given digraph, up to scaling \cite{BalWts}. For example, consider the strongly connected digraph shown in Fig.\ref{fig:digrph}. For this digraph the vector of balancing weights is $w=[0.5,1.5,1]^T$, which is non-trivial and unique up to scaling.

\begin{figure}
   \centerline{\includegraphics[width=0.15\textwidth]{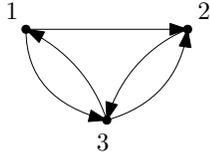}}
    \caption{A directed graph of 3 nodes}
    \label{fig:digrph}
\end{figure}

The SIU algorithm proposed in \cite{KarTAC19} for resilient estimation does not work in general for directed graphs. This is later illustrated through a numerical example in Fig.\ref{fig:states} in Section \ref{sec:SimRes}. We propose to use the idea of balancing weights described above to achieve resilient estimation over digraphs. For the directed graph $\Gamma$, we use the following update rule, proposed in \cite{BalWts}, to iteratively compute a set of balancing weights. Let $w_i(t) \in \R$ denote the weight at node $i$ at time-step $t$. The initial set of weights assigned to the agents satisfy : $w_i(0) \leq (1/\degOutMax)^{2\Phi+1}$, where $\degOutMax$ represents the maximum out-degree and $\Phi$ represents the diameter of the concerned digraph \cite{BalWts}. Then for all $t \geq 0$, 
\begin{equation} \label{eq:Updt_Wi} 
    w_i(t+1) = \frac{1}{2} w_i(t) + \frac{1}{d_i^{\text{out}}} \Bbc{\sum_{j \in \nei_i} \frac{1}{2} w_j(t) }.
\end{equation}
Let $w(t) = \bbc{w_1(t),w_2(t),\hdots,w_N(t)}$ represent the vector of node-weights at time-step $t$. Then the corresponding vector representation of \eqref{eq:Updt_Wi} is given by :
\begin{equation} \label{eq:Updt_W} 
    w(t+1) = P w(t)
\end{equation} 
where $P \coloneqq 0.5 \bbc{I + (D^{\text{out}})^{-1} A}$.
So for the limiting case, $\lim_{\tti} w(t) = \lim_{\tti} P^t w(0)$. 
Now from \emph{Lemma 1} of \cite{BalWts}, we know that $\lim_{\tti} P^t$ exists, and that the sequence $\{ w(t) \} _{t \geq 0}$ converges to the vector of balancing weights. So we define here the vector of weights which balances the digraph $\Gamma$ as 
\begin{equation} \label{eq:par_wInf}
    w^\infty \coloneqq \lim_{\tti} w(t) = \lim_{\tti} P^t w(0)
\end{equation}
The time-varying weighted Laplacian matrix is represented as
\begin{equation} \label{eq:par_Lt}
        L(t) = \big( D^{\text{out}} - A \big) W(t) , \text{ where } W(t) = \text{diag}\bbc{w(t)}
\end{equation} 
Then the Laplacian matrix for the limiting case can be defined using the result from \eqref{eq:par_wInf} in \eqref{eq:par_Lt} as
\begin{equation} \label{eq:par_LInf}
     \linf \coloneqq (\dout - A)\winf , \text{ where } \winf = \text{diag}\{w^\infty\}
\end{equation} 
Now as $w^\infty$ balances the digraph, $\linf$ satisfies the desired balancing condition expressed in \eqref{eq:BalWts}. By definition of $L(t)$ we have $\one^T L(t) = \zro^T$ for all $t \geq 0$. So to arrive at the desired balanced graph condition, we need $L(t) \one = \zro $ which is eventually achieved with $L(t)$ converging to $\linf$ as $\tti$. Next we state a lemma which provides an explicit rate for this convergence and additionally provides the rate of decay of $ L(t) \one $ to $\zro$.

\begin{lemma} \label{lem:Lt_Linf}
    Given $L(t) = \big( \dout - A \big) W(t)$ and $\linf = (\dout - A)\winf$, there exists constants $C > 0$ and $\eta \in (0,1)$, such that $\nrm{L(t) - \dout} \leq C \eta^t$ , $\nrm{L(t) \one} \leq C \eta^t$ for all $t \geq 0$.
\end{lemma}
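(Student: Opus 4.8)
The plan is to reduce both inequalities to a single exponential decay estimate on the weight error $\nrm{w(t) - w^\infty}$ and then transfer it through the fixed matrix $\dout - A$. Once I establish that there exist $C_1 > 0$ and $\eta \in (0,1)$ with $\nrm{w(t) - w^\infty} \le C_1 \eta^t$, both claims follow quickly. Writing $W(t) - \winf = \text{diag}(w(t) - w^\infty)$ and using that the spectral norm of a diagonal matrix equals the largest magnitude of its entries (so $\nrm{\text{diag}(v)} = \nrm{v}_\infty \le \nrm{v}$), I get
\begin{equation*}
\nrm{L(t) - \linf} = \nrm{(\dout - A)(W(t) - \winf)} \le \nrm{\dout - A}\, \nrm{w(t) - w^\infty}.
\end{equation*}
For the second bound, since $\text{diag}(w(t))\one = w(t)$ and $w^\infty$ balances the digraph (so $\linf\one = (\dout - A)w^\infty = \zro$), I have $L(t)\one = (\dout - A)w(t) = (\dout - A)(w(t) - w^\infty)$, whence $\nrm{L(t)\one} \le \nrm{\dout - A}\,\nrm{w(t) - w^\infty}$. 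Taking $C \coloneqq \nrm{\dout - A}\, C_1$ then yields both inequalities with the same $\eta$.

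The core of the argument, and the step I expect to be the main obstacle, is therefore the \emph{exponential rate} for $w(t) = P^t w(0) \to w^\infty$, since the cited result supplies only convergence, not a rate. My plan here is a spectral analysis of $P = 0.5\bbc{I + (\dout)^{-1}A}$. Because $\Gamma$ is strongly connected, $A$ is irreducible, so $P$ is an irreducible nonnegative matrix whose diagonal is strictly positive (from the $0.5 I$ term), making $P$ primitive. The balancing identity $(\dout - A)w^\infty = \zro$ is exactly $Pw^\infty = w^\infty$, so $P$ has the strictly positive eigenvector $w^\infty$ with eigenvalue $1$; by the Perron--Frobenius theorem this forces the spectral radius to equal $1$, to be a simple eigenvalue, and to be the only eigenvalue on the unit circle. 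Hence every other eigenvalue $\lambda$ of $P$ satisfies $|\lambda| \le \eta_0 < 1$, where $\eta_0$ is the largest such magnitude.

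With this spectral picture in hand, I would split $P = P^\infty + Q$, where $P^\infty$ is the spectral projector onto the one-dimensional eigenspace of the eigenvalue $1$ and $Q = P - P^\infty$ carries the remaining spectrum. The projector properties $P^\infty Q = Q P^\infty = 0$ and $(P^\infty)^2 = P^\infty$ give $P^t = P^\infty + Q^t$ for $t \ge 1$, while $\rho(Q) = \eta_0 < 1$ yields $\nrm{Q^t} \le C'\eta^t$ for any chosen $\eta \in (\eta_0, 1)$ and a suitable $C'$ (the Jordan form of the restriction of $Q$ contributes only polynomial-in-$t$ factors, absorbed by slightly enlarging $\eta$). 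Since $w^\infty = \lim_{\tti} P^t w(0) = P^\infty w(0)$, I obtain $\nrm{w(t) - w^\infty} = \nrm{(P^t - P^\infty)w(0)} = \nrm{Q^t w(0)} \le C'\nrm{w(0)}\,\eta^t$, i.e.\ the desired estimate with $C_1 = C'\nrm{w(0)}$. The only delicate points are verifying primitivity and pinning down $\rho(P) = 1$ via the positive balancing eigenvector; everything after that is routine norm bookkeeping.
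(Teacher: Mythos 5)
Your proposal is correct and takes essentially the same route as the paper's own proof: both reduce the claim (with $\linf$ in place of the typo $\dout$ in the lemma statement) to the exponential estimate $\nrm{w(t)-w^\infty} \le c\,\eta^t \nrm{w(0)}$ obtained from the splitting $P^t = P_\infty + (P-P_\infty)^t$ with $\rho(P-P_\infty)<1$, and then transfer it through $\nrm{\dout - A}$ and the diagonal-matrix bound $\nrm{\text{diag}(v)} \le \nrm{v}$. The only differences are matters of sourcing and bookkeeping: you prove primitivity, $\rho(P)=1$, and the decay of $(P-P_\infty)^t$ via Perron--Frobenius and Jordan-form arguments where the paper simply cites Lemma 1 of \cite{BalWts} and Theorem 8.3 of \cite{Hesp}, and your identity $L(t)\one = (\dout - A)(w(t)-w^\infty)$ even avoids the $\sqrt{N}$ factor the paper picks up from bounding $\nrm{(L(t)-\linf)\one}$.
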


The proof of Lemma \ref{lem:Lt_Linf} is given in Appendix \ref{sec:Apdx_Lem1}.
\subsection{Algorithm} \label{sec:Rslt_Algo} 
Now we introduce our algorithm, \emph{Resilient Estimation through Weight Balancing} (REWB). It consists of two main update steps : one for the state of the agents, and the other for the node weights.
\\ The updates performed by agent $i$ at time-step $t$ are :
\begin{enumerate}
    \item[i)] Updating the estimate 
    \begin{equation} \label{eq:Updt_Xi} 
\begin{split}
    x_i(t+1) = \bbc{1 &- \beta(t) w_i(t) d_i^{\text{out}}} x_i(t) \\ 
     &+ \beta(t) \Bbc{\sum_{j \in \nei_i} w_j(t) x_j(t)} \\ 
     &+ \alpha(t) k_i(t) \bbc{y_i(t) - x_i(t)}
\end{split}
\end{equation} 

    \item [ii)] Updating the weight 
    \begin{equation}
    w_i(t+1) = \frac{1}{2} w_i(t) + \frac{1}{d_i^{\text{out}}} \Bbc{\sum_{j \in \nei_i} \frac{1}{2} w_j(t) }
\end{equation}
\end{enumerate}

The update law \eqref{eq:Updt_Xi}, used by agents to update their estimate of $\theta^*(t)$, is based upon the consensus+innovation approach. The first two terms, dealing with the agent's own and neighbours' estimates and the corresponding node-weights, constitute the \emph{consensus} part of the update law. The third term, involving the measurements $y_i(t)$ and a scaling factor $k_i(t)$, constitute the \emph{innovation} part. These two parts working simultaneously through the same update law help in achieving the smaller goals $G1$ and $G2$ mentioned before. The above update law uses step-size parameters $\beta(t)$ and $\alpha(t)$ to assign proper weightage to its consensus and innovation parts respectively. The parameters are defined as :   
\begin{equation}\label{eq:AlpBet}
    \alpha(t) = \frac{\alpha_0}{(1+t)^{\alpha_1}} , \beta(t) = \frac{\beta_0}{(1+t)^{\beta_1}} 
\end{equation} 
where $0 < \alpha_0 \leq 1/(1 - 2s)$ , $0 < \beta_0 < \psi$ , $0 < \beta_1 < \alpha_1 < \theta_1$. The constant $\psi$ is defined as $\psi \coloneqq 2/\bbc{N \degInMax (\degInMax + \degOutMax)}$. Note that $\beta_1 < \alpha_1$ implies that, in the state update law \eqref{eq:Updt_Xi}, the weight of the innovation term decays faster than the weight of the consensus term.

The scaling factor, $k_i(t)$, is used in the innovation part in order to ensure that the effect of the adversaries on the state of an agent always remains bounded. 
\begin{equation} \label{eq:ki}
    k_{i}(t) \coloneqq 
        \begin{cases} 
         1 &, \text{ if } \nrm {y_i(t) - x_i(t)} \leq \gamma(t) \\
         \frac{\gamma(t)}{\nrm{y_i(t) - x_i(t)}} &, \text{ otherwise }
        \end{cases}
\end{equation} 
where $\gamma(t)$ is the output of a dynamical system defined as
\begin{equation} \label{eq:Gam}
    \gamma(t) \coloneqq \gmo{t} + \gmt{t}
\end{equation} 
The dynamics of $\gmo{t}$ and $\gmt{t}$ are defined as 
\begin{align}
    \gmo{t+1} &\coloneqq \bbc{1 - c_1 \mu(t) + (1+\sqrt{N}) \alpha(t)} \gmo{t} \nonumber \\ 
    & \label{eq:Gam1} \hspace{10 ex} + (1 + \sqrt{N}) \alpha(t) \gmt{t} + c_2 \eta^t \\
     \gmt{t+1} &\coloneqq \alpha(t) \gmo{t} + \bbc{1 - \alpha(t) (1-2s) } \gmt{t} \nonumber \\ 
    & \label{eq:Gam2} \hspace{20 ex} + 1/(1+t)^{\theta_1} 
\end{align}
where, $\mu(t) = \frac{\mu_0}{(t+1)^{\mu_1}}$, $\mu_0 > 0$, $\beta_1 < \mu_1 < \alpha_1 $, $c_1 > 0$, $c_2 > 0$, $0 < \eta < 1$. The above time-varying system in two variables plays a crucial role in proving our main result. 
From the definition of $k_i(t)$ in \eqref{eq:ki}, a corresponding diagonal matrix is defined as 
\begin{equation} \label{eq:par_Kt}
    K(t) \coloneqq \text{diag}\bbc{k_1(t), k_2(t), \hdots, k_N(t)}
\end{equation} 
Let $x(t) = \bbc{x_1^T(t), x_2^T(t), \hdots, x_N^T(t)} \in \R^{N \times M} $ represent the matrix whose rows are the state vectors of the agents at time-step $t$. Also let $y(t) = \bbc{y_1^T(t), y_2^T(t), \hdots, y_N^T(t)} \in \R^{N \times M}$ represent the matrix whose rows are the sensor measurements of the agents at time-step $t$. Now we summarise our REWB algorithm as follows :  
\begin{algorithm} 
\caption{ REWB} 
\label{alg:rewb}
 \textbf{Given} : Graph $\Gamma$, $\Theta \geq \nrm{\ts{t}} $, Resilience index $s$, and $\theta_1$
\\ \textbf{Initialize} : $0 < \alpha_0 \leq 1/(1 - 2s)$, $0 < \beta_0 < \psi$, $x(0)=0$, $\mu_0 < (\lambda_m - \beta_0 \lambda_M) \beta_0 / (2 c_1)$, $\gmo{0}=0, \gmt{0}=\Theta$ , $w_i(0) \leq \bbc{ \frac{1}{\degOutMax}}^{2\Phi+1}$
\\ \textbf{Choose} : $0 < \beta_1 < \mu_1 < \alpha_1 < \theta_1$
\\ \textbf{for} $t = 0,1,\hdots $ \textbf{do}
\begin{itemize}
    \item \textbf{record} $y(t)$ 
    \item \textbf{exchange} $x(t)$ among neighbouring agents 
    \item \textbf{update} $x(t)$ :
    \\ $x(t+1) = \bbc{I - \beta(t) L(t)} x(t) + \alpha(t) K(t) \bbc{y(t) - x(t)}$ 
    \item \textbf{update} $w(t)$ :
    \\ $w(t+1) = P w(t)$ 
    \item \textbf{update} $\gamma(t)$ : using equations \eqref{eq:Gam}, \eqref{eq:Gam1} \& \eqref{eq:Gam2} 
\end{itemize}
\textbf{end for}
\end{algorithm}
\subsection{Main Result} \label{sec:Rslt_MR} 
The following theorem states our main result on resilient distributed estimation using the REWB algorithm.  
\begin{theorem} \label{thm}
Suppose Assumptions \ref{asmp:stCnctvty} and \ref{asmp:thtaBnd} hold, and the effect of the adversaries is modelled as in \eqref{eq:par_yi}. Then the REWB algorithm ensures that the state of every agent, $x_i(t)$ converges to $\ts{t}$, provided $s \in [0,\frac{1}{2})$. In particular, 
\begin{equation} \label{eq:thm}
    \lim_{\tti} (t+1)^{\delta_1} \nrm{x_i(t) - \ts{t}} = 0 \text{ , for all } i \in \node
\end{equation} 
where $0 \leq \delta_1 \leq \alpha_1 - \beta_1$
\end{theorem}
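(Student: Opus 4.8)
The plan is to realise the two-part program $G1$--$G2$ by splitting the tracking error of each agent as $x_i(t) - \ts{t} = \big(x_i(t) - \bar x(t)\big) + \big(\bar x(t) - \ts{t}\big)$ and controlling the pieces separately. I would work with the stacked quantities $e_\perp(t) = (I - \J)x(t)$, measuring disagreement, and $\bar e(t) = \bar x(t) - \ts{t}$, the error of the network average. Since $(I-\J)\one\,\theta^{*T} = \zro$, the row-$i$ bound $\nrm{x_i(t) - \ts{t}} \le \nrm{e_\perp(t)} + \nrm{\bar e(t)}$ holds, so it suffices to bound these two quantities by the scalar sequences $\gmo{t}$ and $\gmt{t}$. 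The proof then reduces to (i) proving $\nrm{e_\perp(t)} \le \gmo{t}$ and $\nrm{\bar e(t)} \le \gmt{t}$ for all $t$, and (ii) analysing the coupled system \eqref{eq:Gam1}--\eqref{eq:Gam2} to conclude $(t+1)^{\delta_1}\big(\gmo{t} + \gmt{t}\big) \to 0$.

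The engine driving (i) is the saturation rule \eqref{eq:ki}, which guarantees $\nrm{k_i(t)\big(y_i(t) - x_i(t)\big)} \le \gamma(t)$ for every agent at every $t$. I would first establish, within the same induction that proves (i), that the states remain bounded, using $x(0)=\zro$ so that $e_\perp(0)=\zro=\gmo{0}$ and $\nrm{\bar e(0)} = \nrm{\theta^*(0)} \le \Theta = \gmt{0}$. For the average, left-multiplying \eqref{eq:Updt_Xi} by $\one^T$ and using $\one^T L(t) = \zro^T$ annihilates the consensus term, giving $\bar e(t+1) = \bar e(t) + \tfrac{\alpha(t)}{N}\sum_i k_i(t)\big(y_i(t) - x_i(t)\big) - \big(\theta^*(t+1) - \theta^*(t)\big)$; splitting the sum over $\gd$ and $\bd$, bounding each of the at most $sN$ bad terms by $\gamma(t)=\gmo{t}+\gmt{t}$ through the saturation rule, and using Assumption \ref{asmp:thtaBnd} for the drift $1/(1+t)^{\theta_1}$, reproduces \eqref{eq:Gam2}. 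For the disagreement I would propagate $e_\perp(t+1) = (I-\J)\big[(I-\beta(t)L(t))x(t) + \alpha(t)K(t)(y(t)-x(t))\big]$, write $L(t) = \linf + (L(t)-\linf)$, use that $\linf$ is balanced so its symmetric part is positive definite on $\one^\perp$ to extract a contraction, absorb the residual $\nrm{L(t)-\linf}$ via Lemma \ref{lem:Lt_Linf} into the $c_2\eta^t$ term, and bound the innovation injection by $\alpha(t)\gamma(t)$; this matches \eqref{eq:Gam1}. A simultaneous induction on $t$ then closes (i).

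I expect the main obstacle to be the average-error recursion, specifically producing the decisive factor $(1-2s)$ multiplying $\gmt{t}$ in \eqref{eq:Gam2}, since this is exactly where the hypothesis $s \in [0,\tfrac12)$ is consumed and where the saturation must be handled carefully. The at least $(1-s)N$ good agents each contribute $k_i(t)\big(\ts{t} - x_i(t)\big)$; one must verify that, whether or not the saturation is active, this contribution still points from $x_i(t)$ toward $\ts{t}$ and hence supplies a contraction of size proportional to $(1-s)$ on $\nrm{\bar e}$, while the at most $sN$ bad agents can oppose it by at most $s\,\gamma(t)$. The net coefficient $(1-s)-s = 1-2s$ is positive precisely when $s<\tfrac12$, which is the crux of the whole argument; getting the directional estimate for saturated good agents to survive the passage from an inner-product bound to the norm recursion is the delicate step, and the near-consensus approximation $x_i(t)\approx \bar x(t)$ it rests on must be made rigorous by carrying the $\gmo{t}$ disagreement term along — which is why the two recursions are genuinely coupled.

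Finally I would analyse the two-dimensional time-varying system \eqref{eq:Gam1}--\eqref{eq:Gam2} with vanishing inputs. The ordering $\beta_1 < \mu_1 < \alpha_1 < \theta_1$ separates the time-scales: the off-diagonal couplings are $O(\alpha(t))$ and, since $\mu_1 < \alpha_1$, are dominated by the diagonal contraction $c_1\mu(t)$ in the $\gmo{\cdot}$ equation, while the initialisation constraint $\mu_0 < (\lambda_m - \beta_0\lambda_M)\beta_0/(2c_1)$ ensures $c_1\mu(t)$ is a genuine lower bound on the consensus contraction actually available from $\beta(t)$. Treating \eqref{eq:Gam2} first, a discrete Gronwall / Chung-type estimate for a recursion $a(t+1) \le \big(1 - \alpha(t)(1-2s)\big)a(t) + \text{(vanishing input)}$ with $\sum_t \alpha(t) = \infty$ gives $\gmt{t} \to 0$ at a polynomial rate, and feeding this back shows $\gmo{t}$ decays at least as fast. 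Re-running the same estimates on the scaled sequences $(t+1)^{\delta_1}\gmo{t}$ and $(t+1)^{\delta_1}\gmt{t}$, the extra growth factor $(1+\delta_1/t)$ is absorbed by the contraction precisely when $\delta_1$ lies in the admissible window $[0,\alpha_1-\beta_1]$ fixed by the consensus/innovation time-scale gap, which yields \eqref{eq:thm}. This rate extraction, carried simultaneously over the two coupled sequences, is the most technical computation of the proof.
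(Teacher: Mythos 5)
Your architecture coincides with the paper's proof: the same splitting into $p(t) = x(t) - \one\bar{x}(t)$ and $q(t) = \bar{x}^T(t) - \ts{t}$, the same simultaneous induction establishing $\nrm{p(t)} \leq \gmo{t}$ and $\nrm{q(t)} \leq \gmt{t}$ while defining the dynamics \eqref{eq:Gam1}--\eqref{eq:Gam2}, and the same final step of proving polynomial decay of the coupled $(\gmo{t},\gmt{t})$ system. However, at the very step you single out as the crux --- producing the $(1-2s)$ coefficient --- you have a genuine gap, and your proposed repair goes in the wrong direction. You plan to prove that a \emph{saturated} good agent's innovation still ``points toward $\ts{t}$'' and to carry that inner-product estimate into the norm recursion. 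The paper never needs this, because the induction hypothesis itself rules out saturation of good agents: since $y_i(t) = \ts{t}$ for $i \in \gd$, the hypothesis $\nrm{p(t)} \leq \gmo{t}$, $\nrm{q(t)} \leq \gmt{t}$ gives $\nrm{y_i(t) - x_i(t)} \leq \nrm{p(t)} + \nrm{q(t)} \leq \gamma(t)$, hence $k_i(t) = 1$ exactly for every good agent (this is \eqref{eq:Ki_1} in the paper's proof). The good agents' contribution is therefore exactly linear, and $1 - \alpha(t)(1-2s)$ falls out of pure counting --- at least $(1-s)N$ linear contraction terms against at most $sN$ adversarial terms, each of the latter bounded by $\gamma(t)$ via \eqref{eq:ki} --- with the condition $\alpha_0 \leq 1/(1-2s)$ keeping the coefficient non-negative. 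Your directional argument, besides being unnecessary, would be hard to push through for vector-valued $\ts{t} \in \R^M$: a saturated term $\gamma(t)\bbc{\ts{t}-x_i(t)}/\nrm{\ts{t}-x_i(t)}$ does not yield a clean multiplicative contraction of $\nrm{q(t)}$.

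A second, smaller defect is the order of operations in your analysis of \eqref{eq:Gam1}--\eqref{eq:Gam2}. You propose to ``treat \eqref{eq:Gam2} first'' as a recursion with vanishing input and then feed the result back into \eqref{eq:Gam1}; but the input of \eqref{eq:Gam2} contains $\alpha(t)\gmo{t}$ and the input of \eqref{eq:Gam1} contains $\alpha(t)\gmt{t}$, so neither input is known to vanish, or even to be bounded, before the other sequence is controlled. The paper's Lemma \ref{lem:convergence} resolves this circularity by a joint boundedness bootstrap (expressing $\gmt{t}$ through $\sup_{l}|\gmo{l}|$, dominating $\gmo{t}$ by a monotone majorant system, and only then extracting rates through Propositions \ref{lem:Kar_lem2} and \ref{lem:Kar_lem4}). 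Note also that the contraction available in \eqref{eq:Gam1} is $c_1\mu(t)$, not $\beta(t)$, so the rate window this analysis actually delivers is $0 \leq \delta_0 < \alpha_1 - \mu_1$; your claim that the admissible window is fixed by the gap $\alpha_1-\beta_1$ should be checked against that.
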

The proof of Theorem-\ref{thm} is given in Appendix-\ref{sec:Apdx_Thm}. Here we provide a remark on the above theorem. 

\begin{remark}
    From Theorem-\ref{thm} it can be inferred that as long as less than half the total number of agents are under attack by the adversaries, the REWB algorithm ensures that each agent correctly estimates $\ts{t}$. Also note that all the agents, even the bad agents, achieve consensus and estimate $\ts{t}$ in a distributed manner.
\end{remark}
\begin{remark}
    As noted in Lemma \ref{lem:Lt_Linf}, the dynamic weights $w(t)$ converge to the balancing weights at an exponential rate ($\eta^t$), whereas all time-varying signals in the dynamics of the state update rule converge at a polynomial rate ($\alpha(t) = \alpha_0/(1+t)^{\alpha_1}$ etc.). Thus, the weights converge faster, which are in turn used in the state update rule. This two time-scales approach facilitates convergence of the algorithm.
\end{remark} 

\section{Simulation Results} \label{sec:SimRes}
We evaluate the performance of our proposed REWB algorithm through numerical simulations. A random network, consisting of 100 agents with directed edges, is generated where each possible edge has a probability of $0.5$. It models the communication network among the agents. Each agent estimates a scalar time-varying parameter $\theta^*(t) = 25 + 1/(t+1)$ with $\Theta = 50$ and $ \theta_1 = 1$. The required algorithm parameters are chosen as : $\alpha_0 = 0.01 , \alpha_1 = 0.075 , \beta_0 = 0.01 , \beta_1 = 0.01 , \mu_0 = 0.025 , \mu_1 = 0.025 , c_1 = 75, c_2 = 75$ , and $\eta = 0.5$. The initial weights are chosen as $w_i(0)=0.1 $ $\forall i \in \node$.

The noise term $\zeta_i(t)$ models the effect of the adversaries on the sensor measurements of agent $i$. For each bad agent $i \in \bd$, at every time step, $\zeta_i(t)$ takes on a random value uniformly distributed between 0 and $-\Theta$. Note that the REWB algorithm works for any other range also. We select the base resilience index to be $s = 0.405$, and correspondingly choose $|\bd| = 40$. At first we consider two cases with respect to the set of agents under attack and observe the performance of the REWB algorithm. In Fig. \ref{fig:rNoiseA}, $\bd$ has a fixed set of agents, while in Fig. \ref{fig:rNoiseB}, $\bd$ is allowed to vary with time. Both the plots in Fig. \ref{fig:rNoise} show the error in estimation of $\theta^*(t)$ by the agents, given by $||x(t) - \theta^*(t) \one||$. From the proof of Theorem \ref{thm} we have $|x_i(t) - \theta^*(t)| \leq \gamma(t)$, $\forall i \in \node$, $\forall t \geq 0$. Then for a set of $N$ agents, we have $||x(t) - \theta^*(t) \one|| \leq \sqrt{N} \gamma(t)$. Fig. \ref{fig:rNoise} shows that, regardless of adversaries attacking a fixed or varying set of agents, the REWB algorithm ensures that the estimation error always remains bounded by $\sqrt{N} \gamma(t)$, and consequently dies down asymptotically. 

\begin{figure}
    \centering
    \begin{subfigure}[b]{0.475\columnwidth}
         \centering
         \includegraphics[width=\textwidth]{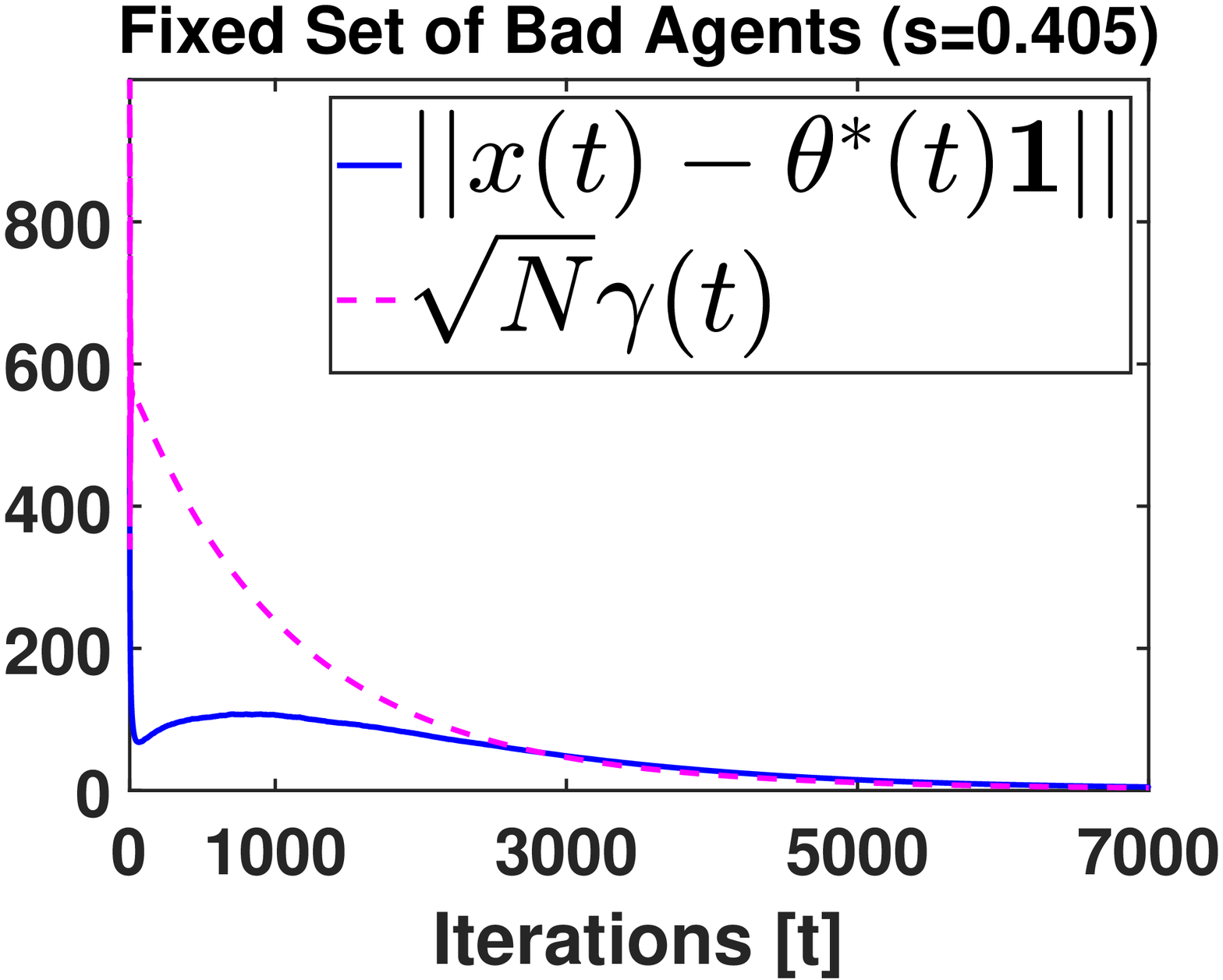}
         \caption{}
         \label{fig:rNoiseA}
     \end{subfigure}
    \hfill
    \begin{subfigure}[b]{0.475\columnwidth}
         \centering
         \includegraphics[width=\textwidth]{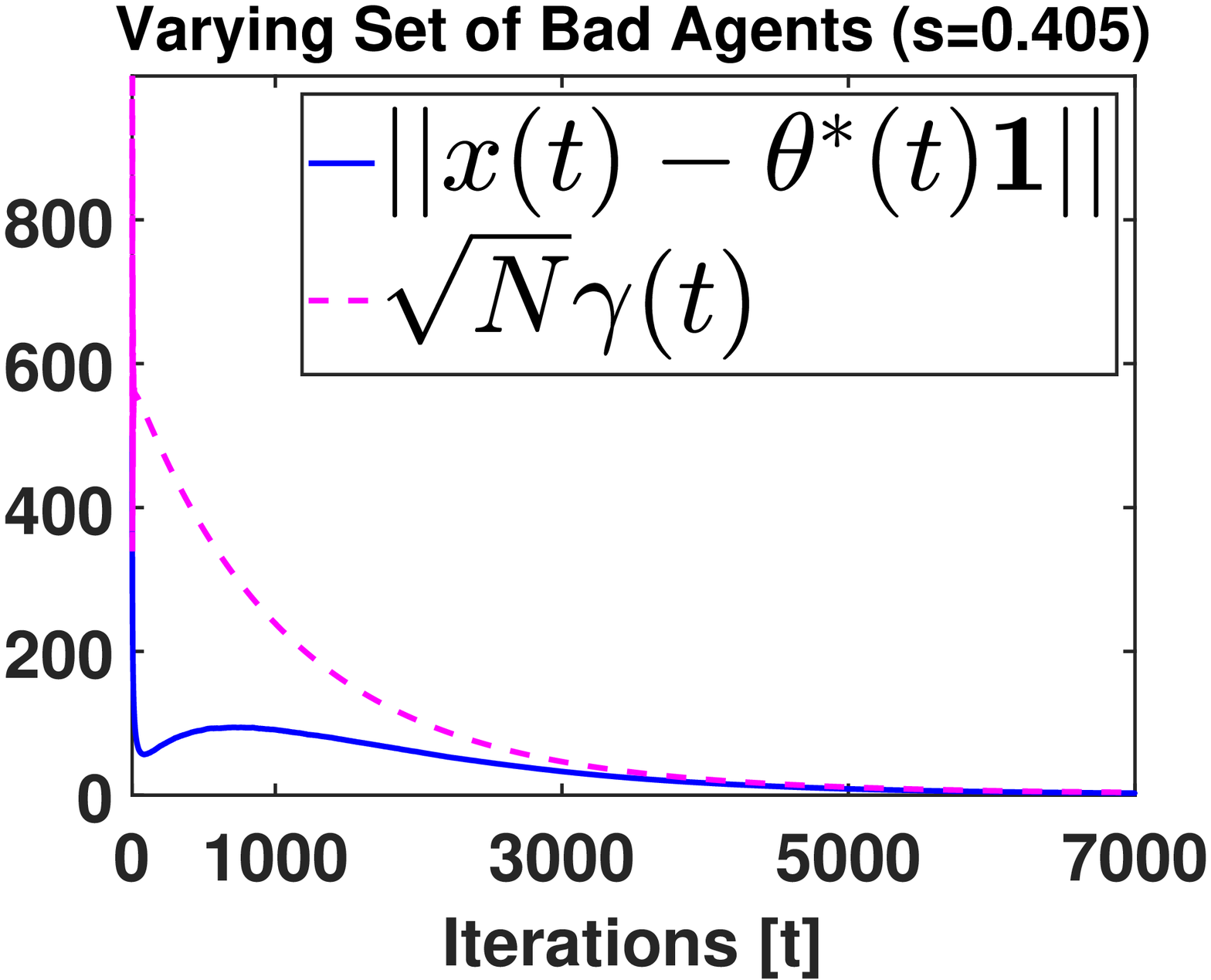}
         \caption{}
         \label{fig:rNoiseB}
     \end{subfigure}
    \caption{Performance of REWB with adversarial attacks on 40 agents where the set of bad agents is (a) fixed, and (b) variable.}
    \label{fig:rNoise}
\end{figure}

Next we use two different variations in operating conditions compared to the one used in Fig. \ref{fig:rNoiseA} and observe their effect in the performance of the REWB algorithm in Fig. \ref{fig:fxdSet}. For the plot in Fig. \ref{fig:fxdSetA}, the resilience index is decreased to $s=0.255$ and correspondingly we choose $|\bd|=25$. As can be observed the estimation error dies down much faster with a decrease in $s$. Next for the plot Fig. \ref{fig:fxdSetB}, we simulate an increase in the degree of manipulation done by the adversaries on the sensor measurements by increasing the noise level. We assign $\zeta_i(t) = 5 \Theta$ $\forall i \in \bd$, $\forall t \geq 0$. As is evident from Fig. \ref{fig:fxdSetB}, a high value of $\zeta_i(t)$ is also quite efficiently handled by the REWB algorithm, with the estimation error remaining bounded by $\sqrt{N} \gamma(t)$ at all times and eventually converging to 0. From Fig. \ref{fig:rNoise} and Fig. \ref{fig:fxdSet} it is evident that the REWB algorithm ensures that even the bad agents are able to eventually correctly estimate the true value of $\theta^*(t)$, along with the good agents. This is in accordance with the Remark stated in Section \ref{sec:Rslt_MR}.

\begin{figure}
    \centering
    \begin{subfigure}[b]{0.475\columnwidth}
         \centering
         \includegraphics[width=\textwidth]{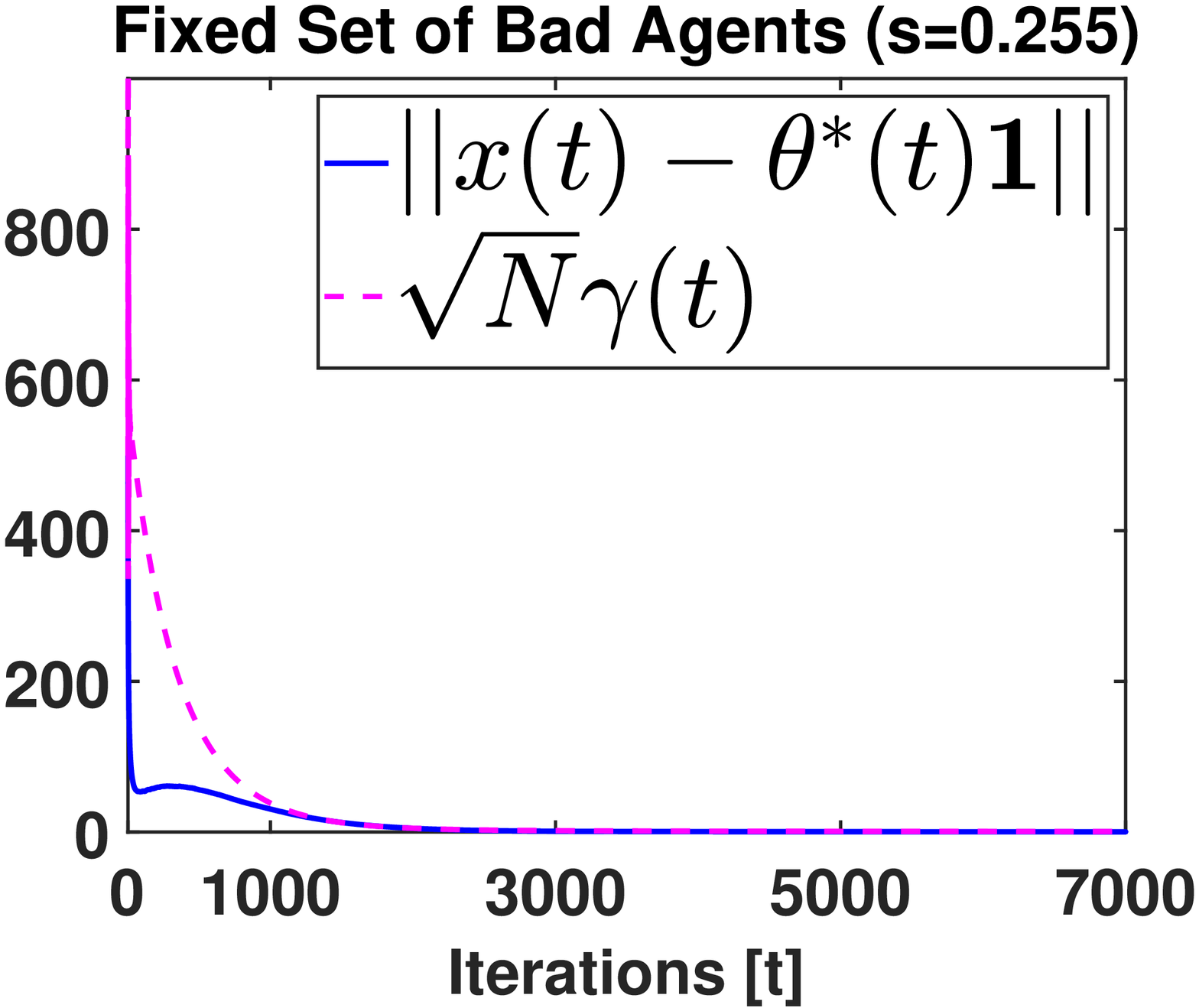}
         \caption{}
         \label{fig:fxdSetA}
     \end{subfigure}
    \hfill
    \begin{subfigure}[b]{0.475\columnwidth}
         \centering
         \includegraphics[width=\textwidth]{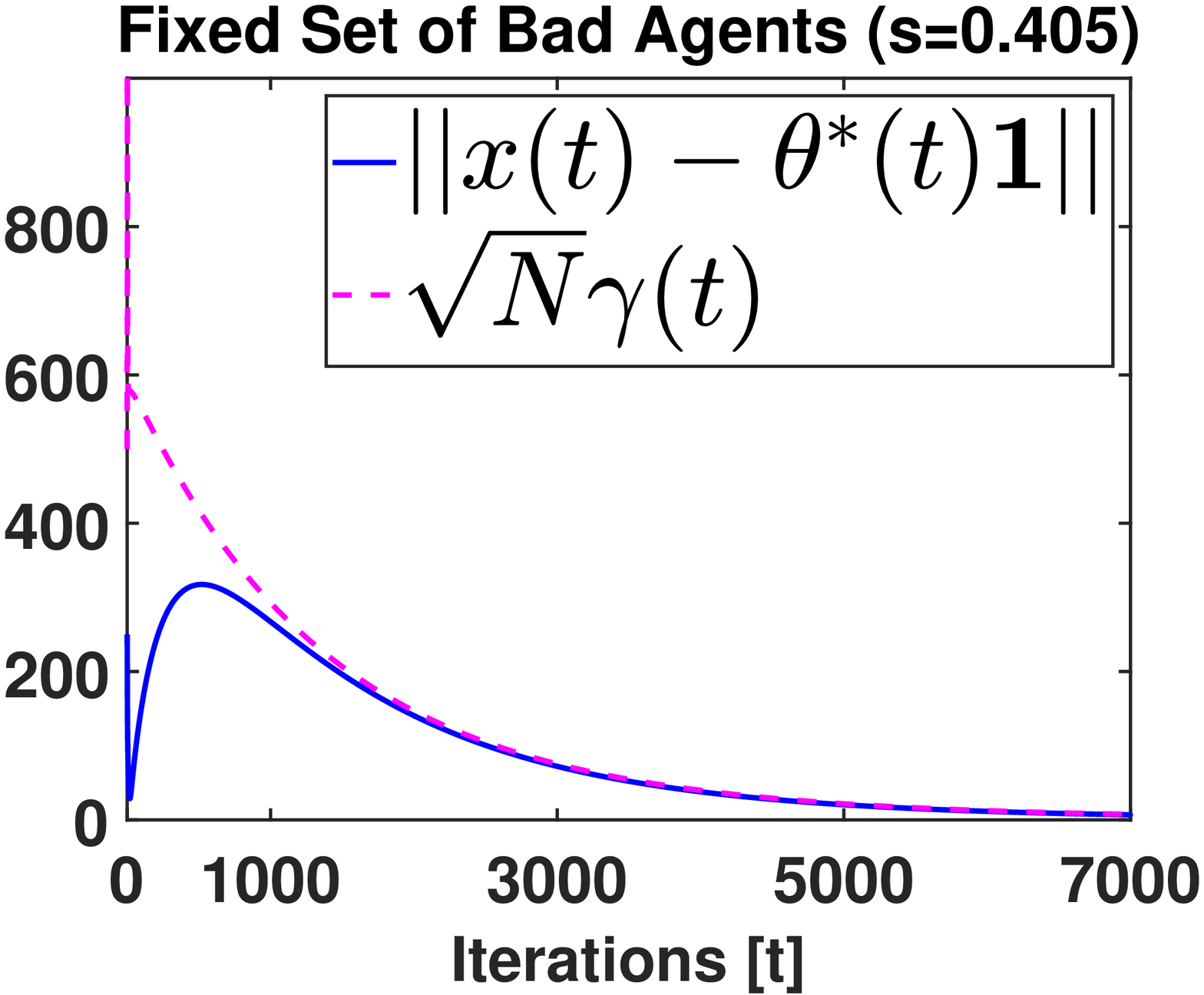} 
         \caption{}
         \label{fig:fxdSetB}
     \end{subfigure}
    \caption{Performance of REWB with (a) decrease in resilience index, and (b) increased manipulation in sensor measurements by adversaries.}
    \label{fig:fxdSet}
\end{figure}

In Section \ref{sec:Rslt_WB}, we mentioned that the SIU algorithm in \cite{KarTAC19} does not give convergence in general when applied over a directed network of agents. In Fig. \ref{fig:states}, we compare the performance of our REWB algorithm with the SIU algorithm in estimating the value of a scalar constant parameter $\theta^* \in \R$ over a directed network of 100 agents with $s=0.405$. The two plots on the left show how the states of the agents behave with time, while the two plots on the right show the net estimation error. Fig. \ref{fig:statesA} shows how on applying the SIU algorithm, the states of the agents diverge away from each other and never achieve consensus, leading to a constant estimation error. On the other hand, Fig. \ref{fig:statesB} shows how our REWB algorithm not only ensures the agents reach consensus but they also correctly estimate the value of $\theta^*$. This is made possible by the introduction of the weight balancing idea while designing the REWB algorithm. The dynamics of the time-varying weights ensure that the weighted graph eventually approaches a balanced condition, and thus consensus is achieved. 

\begin{figure}
   \centering
   \begin{subfigure}[b]{\columnwidth}
        \centering
        \begin{subfigure}[b]{0.475\columnwidth}
            \centering
            \includegraphics[width=\textwidth]{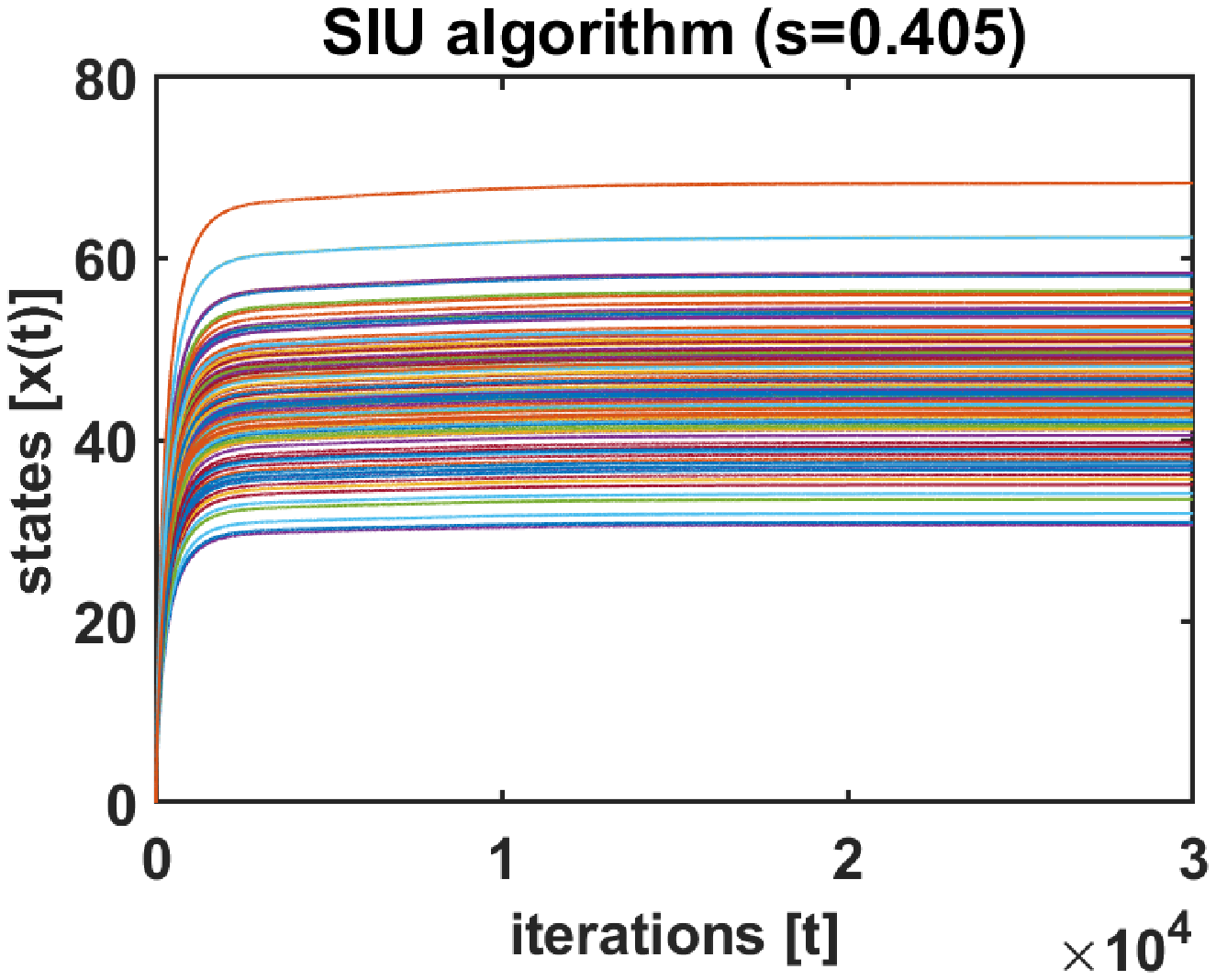}
        \end{subfigure}
        \hfill
        \begin{subfigure}[b]{0.475\columnwidth}
            \centering
            \includegraphics[width=\textwidth]{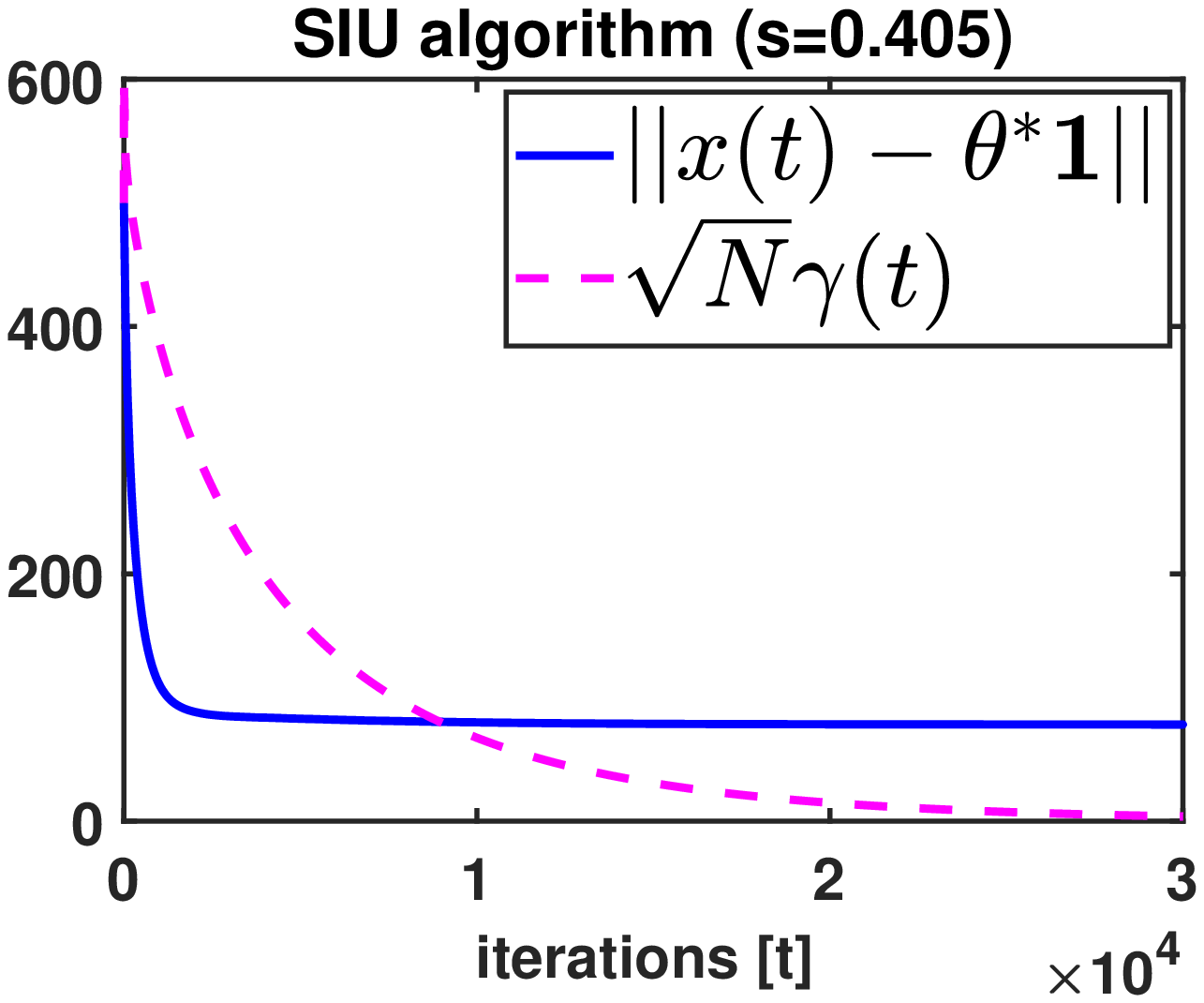}
        \end{subfigure}
        \caption{}
        \label{fig:statesA}
    \end{subfigure}
    \centering
    \begin{subfigure}[b]{\columnwidth}
        \centering
        \begin{subfigure}[b]{0.475\columnwidth}
            \centering
            \includegraphics[width=\textwidth]{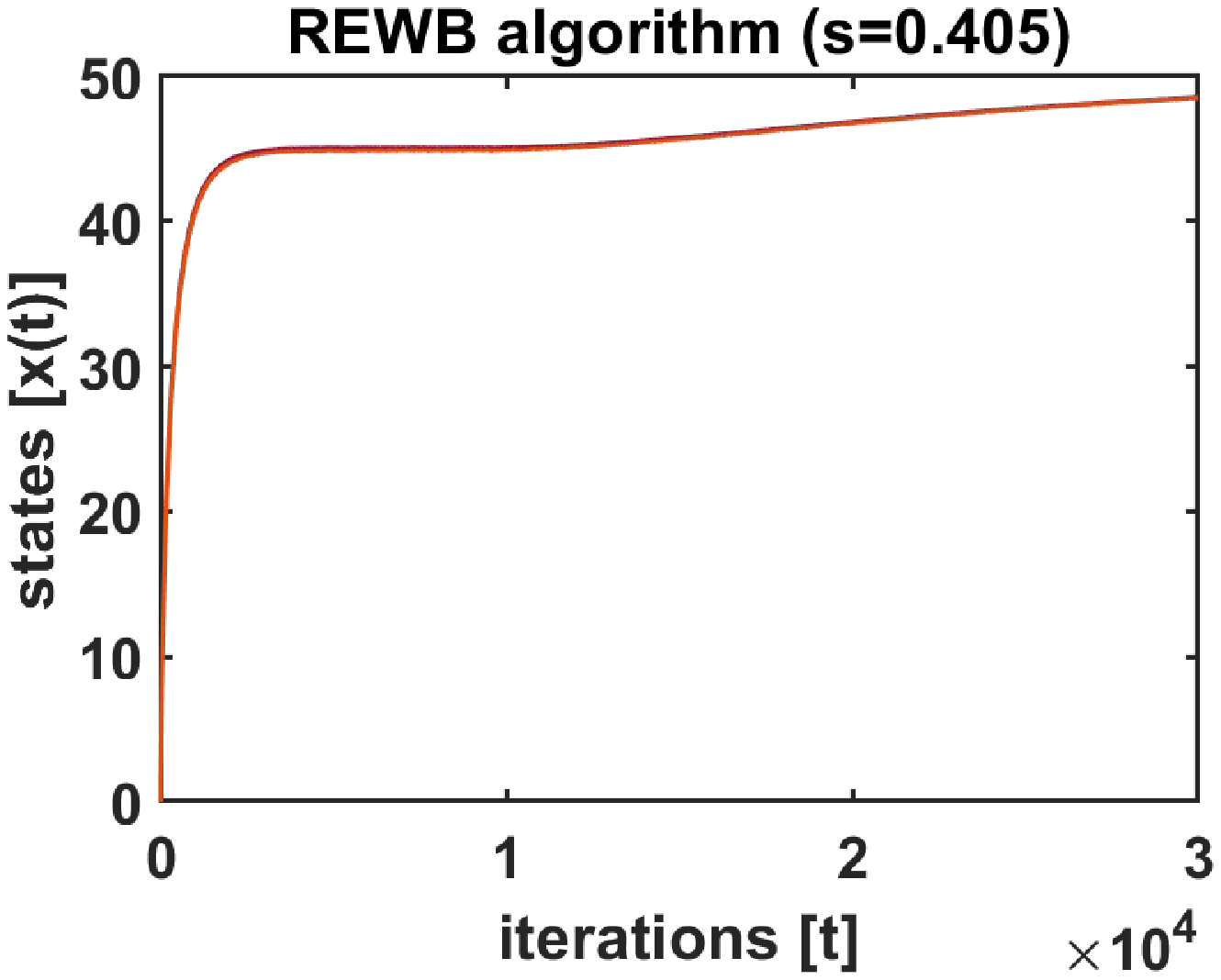}
        \end{subfigure}
        \hfill
        \begin{subfigure}[b]{0.475\columnwidth}
            \centering
            \includegraphics[width=\textwidth]{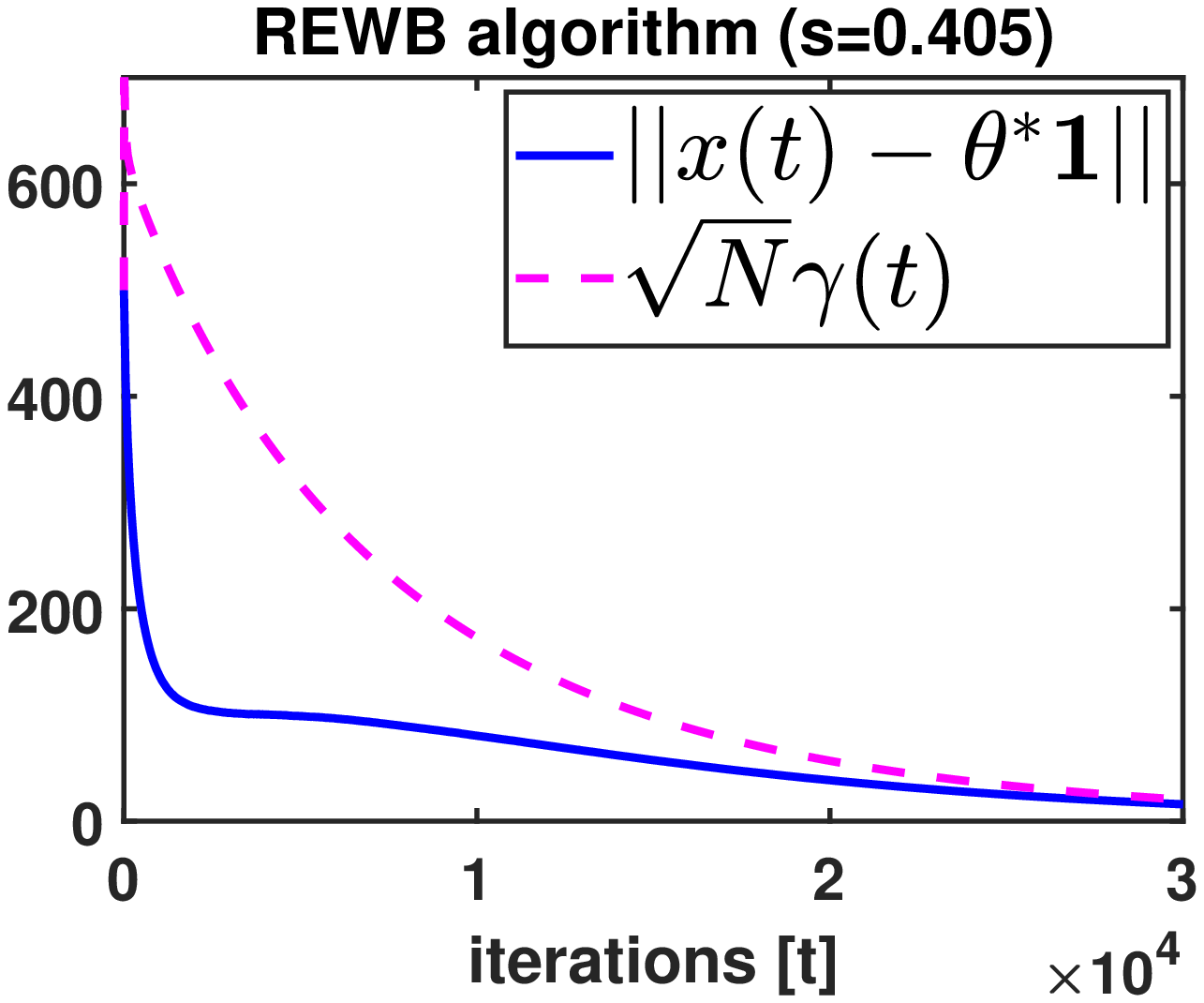}
        \end{subfigure}
        \caption{}
        \label{fig:statesB}
    \end{subfigure}
    \caption{Performance of (a) the SIU \cite{KarTAC19} algorithm and (b) the proposed REWB algorithm over directed graphs}
    \label{fig:states}
\end{figure}

\section{Conclusion} \label{sec:Con}
In this paper we propose the Resilient Estimation through Weight Balancing (REWB) algorithm. It is a distributed estimation algorithm designed to work for a network of sensor nodes with directed communication links. The REWB algorithm is resilient to sensor spoofing type adversarial attacks while estimating an unknown time-varying parameter with a decaying bound on its variations. It ensures that along with the unaffected agents, even the agents under attack estimate the true value of the parameter in a distributed manner. The proposed algorithm is developed based on the consensus+innovation approach and uses the weight-balancing idea to ensure consensus over directed graph. Through numerical simulations it is shown that the proposed algorithm accurately estimates an unknown parameter under different attack conditions, provided less than half of the agents are under adversarial attack at any given point of time. Future direction of work is to consider other models of adversarial attacks.

\useRomanappendicesfalse
\appendices 
\section{} \label{sec:Apdx_Lem1} 
\begin{proof}[Proof of Lemma \ref{lem:Lt_Linf}]
By definition, $P$ is a primitive matrix with spectral radius 1 \cite{BalWts}. Then by properties of primitive matrices :  $\lim_{\tti} P^t$ exists, and $\lim_{\tti} P^t = P_\infty = u v^T$, where $u, v$ are the right and left eigen-vectors of $P$ corresponding to eigen-value 1, and $v^T u = 1$. Then from \eqref{eq:Updt_Wi} and \eqref{eq:par_wInf} we have :
\begin{equation} \label{eq:lem1_1}
    w(t) - w^\infty = (P^t - P_\infty) w(0) 
\end{equation} 
Using the properties of $P$ discussed above we get $(P - P_\infty)^t = P^t - P_\infty$, for all $t \geq 1$.
Then from \eqref{eq:lem1_1} we have 
\begin{equation*}
     w(t) - w^\infty = (P^t - P_\infty) w(0) = (P - P_\infty)^t w(0)
\end{equation*}
So by Theorem 8.3 in \cite{Hesp}, there exists $c > 0, \eta < 1$ such that for all $t \geq 0$
\begin{equation} \label{eq:bnd_WtWinf}
    \nrm{w(t) - w^\infty} \leq c \eta^t \nrm{w(0)}
\end{equation} 
Using \eqref{eq:par_Lt} and \eqref{eq:par_LInf}, and applying the properties of sub-multiplicativity of spectral norm we have 
\begin{equation} \label{eq:LtLinf_WtWinf}
    \nrm{L(t) - \linf} \leq \nrm{\dout - A} \nrm{W(t) - \winf}
\end{equation} 
Now $W(t), \winf$ are diagonal matrices and for any diagonal matrix $M = \text{diag}(m_1,\hdots,m_N)$ we have $||M|| \leq ||m||_\infty \leq ||m||$. Applying this to \eqref{eq:LtLinf_WtWinf} and using the result from \eqref{eq:bnd_WtWinf} we get
\begin{equation} \label{eq:bnd_LtLinf}
    \nrm{L(t) - \linf} \leq C_L \eta^t
\end{equation}
where $C_L = c \nrm{\dout - A} \nrm{w(0)} > 0$. 
\\ From \eqref{eq:par_Lt} using the sub-multiplicativity property of the norm and the result from \eqref{eq:bnd_LtLinf} we get
\begin{equation} \label{eq:bnd_LtOne}
    \nrm{L(t) \one} = \nrm{(L(t) - \linf) \one} \leq \sqrt{N} C_L \eta^t
\end{equation}
By choosing $C \geq \sqrt{N} C_L$ and applying to \eqref{eq:bnd_LtLinf} and \eqref{eq:bnd_LtOne} we get : 
\begin{equation*}
    \nrm{L(t) - \linf} \leq C \eta^t \text{  ,  } \nrm{L(t) \one} \leq C \eta^t
\end{equation*}
\end{proof}

\section{} \label{sec:Apdx_IL} 
Here we introduce some \emph{intermediate lemmas} which will be useful in the proof of Theorem-\ref{thm}. At first, before proceeding to a time-varying system in two variables, we first analyse the dynamics of a scalar time-varying system. Consider a linear scalar time-varying system - 
\begin{equation} \label{eq:TVsys_Sclr}
        v_{t+1} = \big( 1 - r_1(t) \big) v_t + r_2(t)
\end{equation} 
where \begin{equation} \label{eq:TVsys_r1r2}
        r_1(t) = \frac{c_1}{(1+t)^{\delta_1}} , r_2(t) = \frac{c_2}{(1+t)^{\delta_2}}
    \end{equation} 
where $c_1, c_2, \delta_2$ are positive constants, and $0 \leq \delta_1 \leq 1$.

The following result is based upon the results introduced in Lemma 25 in \cite{KarIT12} and Lemma 3 in \cite{KarTAC19}. It provides a relation between $\delta_1$ and $\delta_2$ under which the dynamics of the scalar time-varying system in \eqref{eq:TVsys_Sclr} is bounded. It also gives the condition under which the system dynamics converges to zero, and the corresponding rate of convergence. 
\begin{proposition}
\label{lem:Kar_lem2}
 Consider the system given in \eqref{eq:TVsys_Sclr} where $r_1(t), r_2(t)$ is given by \eqref{eq:TVsys_r1r2}. Then if $\delta_1 = \delta_2$, there exists $B > 0$, such that for sufficiently large non-negative integers $j < t$ , 
\begin{equation*}
    0 \leq \sum_{k=j}^{t-1} \left( \prod_{l=k+1}^{t-1} \bbc{1 - r_1(l)} \right) r_2(k) \leq B 
\end{equation*} 
Moreover the constant $B$ can be chosen independently of $t,j$. 
Also, if $\delta_2 > \delta_1 $, then for arbitrary fixed $j$, 
\begin{equation*}
    \lim_{\tti} \sum_{k=j}^{t-1} \left( \prod_{l=k+1}^{t-1} \bbc{1 - r_1(l)} \right) r_2(k) = 0
\end{equation*}
and correspondingly $\lim_{\tti} (t+1)^{\delta_0} v_t = 0$
for all $0 \leq \delta_0 < \delta_2 - \delta_1$, and for all initial conditions $v_0$.
\end{proposition}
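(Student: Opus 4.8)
My plan is to recognize the displayed double sum as the particular (zero-initial-condition) solution of \eqref{eq:TVsys_Sclr} and to analyse it \emph{through the recursion itself}, rather than through the closed-form product. Writing $S_t \coloneqq \sum_{k=j}^{t-1}\bbc{\prod_{l=k+1}^{t-1}(1-r_1(l))}r_2(k)$, a one-line index shift shows $S_t$ obeys $S_{t+1} = (1-r_1(t))S_t + r_2(t)$ with $S_j = 0$; that is, $S_t$ is exactly \eqref{eq:TVsys_Sclr} started from $0$ at time $j$. The general solution then decomposes as $v_t = \bbc{\prod_{l=0}^{t-1}(1-r_1(l))}v_0 + S_t$, so once $S_t$ is controlled the behaviour of $v_t$ follows by bounding the homogeneous factor. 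Throughout I would first fix $j$ large enough that $r_1(l)\le 1$ for all $l\ge j$; since $r_1$ decreases to $0$ this is possible, and it makes every factor $1-r_1(l)$, hence the whole sum, nonnegative, settling the lower bound $0\le S_t$ in both cases.

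For the bounded case $\delta_1 = \delta_2$, I would prove $S_t \le B$ with $B \coloneqq c_2/c_1$ by induction. The base case is $S_j = 0 \le B$, and the step reads $S_{t+1} \le B - r_1(t)B + r_2(t) = B - c_2(1+t)^{-\delta_1} + c_2(1+t)^{-\delta_2}$, where the two trailing terms cancel precisely because $\delta_1 = \delta_2$, giving $S_{t+1}\le B$. Since $B$ depends only on $c_1,c_2$, it is independent of $t$ and $j$, as required.

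For the decaying case $\delta_2 > \delta_1$ I would proceed in two stages. To show $S_t\to 0$, I would use the telescoping identity $\sum_{k=m}^{t-1}\bbc{\prod_{l=k+1}^{t-1}(1-r_1(l))}r_1(k) = 1 - \prod_{l=m}^{t-1}(1-r_1(l)) \le 1$, obtained by writing $r_1(k)=1-(1-r_1(k))$. Given $\varepsilon>0$, choose $m$ so that $r_2(k)\le \varepsilon\, r_1(k)$ for $k\ge m$ (possible since $r_2/r_1 = (c_2/c_1)(1+t)^{\delta_1-\delta_2}\to 0$); the tail of $S_t$ from $m$ onward is then at most $\varepsilon$, while the finite head carries the factor $\prod_{l=k+1}^{t-1}(1-r_1(l)) \le \exp(-\sum_{l=k+1}^{t-1} r_1(l))$, which vanishes because $\sum_l r_1(l)$ diverges for $\delta_1\le 1$. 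Hence $\limsup_t S_t \le \varepsilon$ for all $\varepsilon$, so $S_t\to 0$. For the rate I would fix $\delta_0' \in (0,\delta_2-\delta_1)$ and show by induction that $S_t \le M(1+t)^{-\delta_0'}$ eventually: using the convexity bound $(1+\tfrac1{1+t})^{-\delta_0'}\ge 1-\tfrac{\delta_0'}{1+t}$, the step reduces to $c_2(1+t)^{-(\delta_2-\delta_0'-\delta_1)}\le \tfrac12 c_1 M$ after absorbing the lower-order $\delta_0'(1+t)^{-(1-\delta_1)}$ correction, and the left side vanishes exactly because $\delta_0' < \delta_2-\delta_1$. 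Then, for the stated $\delta_0 < \delta_2-\delta_1$, I would choose $\delta_0'\in(\delta_0,\delta_2-\delta_1)$ and conclude $(1+t)^{\delta_0}S_t \le M(1+t)^{-(\delta_0'-\delta_0)}\to 0$. Finally, since $\prod_{l=0}^{t-1}(1-r_1(l)) \le \exp(-\tfrac{c_1}{1-\delta_1}(1+t)^{1-\delta_1})$ decays faster than any polynomial when $\delta_1<1$, the same rate transfers to $v_t$, giving $(t+1)^{\delta_0}v_t\to 0$ for every $v_0$.

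The main obstacle is the rate estimate: the sharp exponent is obtained only via the ``buffer'' device of proving the bound for a strictly larger $\delta_0'$ and then trading the surplus $(1+t)^{-(\delta_0'-\delta_0)}$ for convergence, which requires careful tracking of the lower-order correction term and of the threshold beyond which the inductive step holds. The boundary case $\delta_1 = 1$ is also delicate, since there the homogeneous factor only decays polynomially and the correction $\delta_0'(1+t)^{-(1-\delta_1)}$ no longer vanishes, forcing a constraint relating $c_1$ to the exponents; I would treat $\delta_1<1$ as the primary case and address $\delta_1=1$ separately.
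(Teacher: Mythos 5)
The paper never proves this proposition itself: it is imported from Lemma 25 of \cite{KarIT12} and Lemma 3 of \cite{KarTAC19}, with no argument given in the text. Your self-contained proof is therefore a genuinely different (and more informative) route, and it is essentially correct in the regime the paper actually uses. The skeleton is sound: recognizing the double sum as the zero-initial-state solution $S_t$ of the recursion, the induction giving the uniform bound $B = c_2/c_1$ when $\delta_1 = \delta_2$, the telescoping identity $\sum_{k=m}^{t-1}\bigl(\prod_{l=k+1}^{t-1}(1-r_1(l))\bigr)r_1(k) = 1-\prod_{l=m}^{t-1}(1-r_1(l))$ combined with $r_2/r_1\to 0$ and $\sum_l r_1(l)=\infty$ for the limit, and the buffer-exponent induction $S_t \le M(1+t)^{-\delta_0'}$ with $\delta_0' \in (\delta_0, \delta_2-\delta_1)$ for the rate. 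The threshold/constant interplay in that last induction is non-circular, since the threshold needed to absorb the $\delta_0'(1+t)^{\delta_1-1}$ correction does not involve $M$, and $M$ can then be taken to dominate both $2c_2/c_1$ and the value of $(1+t)^{\delta_0'}S_t$ at the threshold. What your proof buys is an explicit, checkable argument that exposes exactly which hypotheses matter; what the paper's citation buys is brevity.

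Two caveats, both at the edges of the parameter range. First, your blanket step of fixing $j$ so that $r_1(l)\le 1$ for all $l \ge j$ needs $\delta_1>0$ (for $\delta_1=0$ the sequence $r_1$ is the constant $c_1$), and for the finitely many early indices where $r_1(l)>1$ the bound $|1-r_1(l)|\le e^{-r_1(l)}$ is false, so in the ``arbitrary fixed $j$'' part those factors must be absorbed into a multiplicative constant rather than into the exponential; this is cosmetic. Second, and more substantively, the boundary case $\delta_1=1$ that you deferred is not merely delicate: the statement is false there as written. With $\delta_1=1$ the homogeneous factor decays like $t^{-c_1}$, so taking, e.g., $c_1=1/2$, $\delta_2=2$, $v_0=1$, one gets $(t+1)^{\delta_0}v_t\to\infty$ for every $\delta_0\in(1/2,1)$, contradicting the claimed limit for all $\delta_0<\delta_2-\delta_1$; the constraint relating $c_1$ to the exponents that you anticipated (namely $c_1\ge\delta_2-1$) is genuinely required at $\delta_1=1$. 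Since the paper only ever invokes the proposition with $\delta_1\in\{\mu_1,\alpha_1\}\subset(0,1)$, your proof covers everything that is needed.
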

The following result provides the rate of convergence of a scalar system modified from \eqref{eq:TVsys_Sclr}.
\begin{proposition} [Lemma 4 in \cite{KarTAC19}] 
\label{lem:Kar_lem4}
Consider the scalar time-varying linear system : 
    \begin{equation}\label{lem1}
        v_{t+1} = \big( 1 - c_3r_2(t) + c_4r_1(t) \big) v_t + c_5r_1(t)
    \end{equation} 
    where $r_1(t), r_2(t)$ are given by : 
    \begin{equation*}
        r_1(t) = \frac{c_1}{(1+t)^{\delta_1}} , r_2(t) = \frac{c_2}{(1+t)^{\delta_2}}
    \end{equation*} 
    where $c_1,c_2,\hdots,c_5 > 0$, and $0 < \delta_2 < \delta_1 < 1$. 
    \\ The system in \eqref{lem1} satisfies $\lim_{\tti} (t+1)^{\delta_0} v_t = 0$
    for all $0 \leq \delta_0 < \delta_1 - \delta_2$, and for all initial conditions $v_0$.
\end{proposition}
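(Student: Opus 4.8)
The plan is to solve the recursion explicitly and then reduce the resulting forced-response sum to Proposition~\ref{lem:Kar_lem2}. Writing $\phi(l) \coloneqq 1 - c_3 r_2(l) + c_4 r_1(l)$ for the time-varying decay factor, the variation-of-constants formula for \eqref{lem1} gives
\begin{equation*}
v_t = \Bbc{\prod_{l=0}^{t-1} \phi(l)} v_0 + \sum_{k=0}^{t-1} \Bbc{\prod_{l=k+1}^{t-1} \phi(l)} c_5 r_1(k).
\end{equation*}
I would show that both terms, after multiplication by $(t+1)^{\delta_0}$, vanish as $\tti$. The key structural observation driving everything is that, because $\delta_2 < \delta_1$, the forcing power $r_1$ decays strictly faster than the effective decay power $r_2$, so the quasi-stationary balance $c_5 r_1(k) / (c_3 r_2(k))$ scales like $(1+t)^{-(\delta_1-\delta_2)}$, which is the claimed rate.

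For the homogeneous term, note that since $\delta_2 < \delta_1$, $r_2(l)$ dominates $r_1(l)$ for large $l$, so there is $T_0$ with $\tfrac{c_3}{2} r_2(l) \le c_3 r_2(l) - c_4 r_1(l) < 1$ for all $l \ge T_0$; hence $\phi(l) \in (0,1)$ there. Using $\log \phi(l) \le -\tfrac{c_3}{2} r_2(l)$ together with $\sum_l r_2(l) = \infty$ (which holds because $\delta_2 < 1$), the tail product $\prod_{l=T_0}^{t-1}\phi(l)$ decays like $\exp(-C\,t^{1-\delta_2})$, beating any polynomial. Since the finite prefix $\prod_{l=0}^{T_0-1}\phi(l)$ is a fixed constant, $(t+1)^{\delta_0}\prod_{l=0}^{t-1}\phi(l) \to 0$.

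For the forced term I would split the sum at $T_0$. The prefix $\sum_{k<T_0}$ is again multiplied by $\prod_{l=k+1}^{t-1}\phi(l)$, which decays faster than any polynomial, so its contribution vanishes after scaling by $(t+1)^{\delta_0}$. For the tail $k \ge T_0$, I use the comparison $\phi(l) \le 1 - \tfrac{c_3}{2} r_2(l)$ to bound
\begin{equation*}
0 \le \sum_{k=T_0}^{t-1} \Bbc{\prod_{l=k+1}^{t-1}\phi(l)} c_5 r_1(k) \le \sum_{k=T_0}^{t-1} \Bbc{\prod_{l=k+1}^{t-1}\bbc{1 - \hat r(l)}} c_5 r_1(k),
\end{equation*}
where $\hat r(l) \coloneqq \tfrac{c_3}{2} r_2(l)$ is a single power with exponent $\delta_2 \in (0,1)$. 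This last sum is exactly of the form handled in Proposition~\ref{lem:Kar_lem2}, with the \emph{decay} exponent equal to $\delta_2$ and the \emph{forcing} exponent equal to $\delta_1$. As $\delta_1 > \delta_2$, the second part of Proposition~\ref{lem:Kar_lem2} applies and gives $\lim_{\tti}(t+1)^{\delta_0}$ times this sum equal to $0$ for every $0 \le \delta_0 < \delta_1 - \delta_2$. Combining the three estimates yields the claim.

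The main obstacle is the sign behaviour of $\phi(l)$: because the decay factor is a \emph{difference} of two powers, $1 - c_3 r_2 + c_4 r_1$, one must carefully isolate the transient regime (small $l$, where $r_1$ may dominate and $\phi(l)$ can even exceed $1$) from the asymptotic regime, and then justify replacing the two-power factor by the single-power factor $1 - \hat r(l)$ so that Proposition~\ref{lem:Kar_lem2} becomes applicable. Establishing the comparison inequality uniformly for $l \ge T_0$, and confirming that the hypotheses of Proposition~\ref{lem:Kar_lem2} (decay exponent in $[0,1]$, forcing exponent strictly larger) are met, is the crux; the remaining work is routine bookkeeping on the finite prefixes.
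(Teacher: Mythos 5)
Your proposal is correct, but there is no in-paper proof to compare it against: the paper does not prove this statement at all. Proposition~\ref{lem:Kar_lem4} is imported verbatim as Lemma~4 of \cite{KarTAC19} and is used as a black box (in Steps~5 and~6 of the proof of Lemma~\ref{lem:convergence}). What you have written is therefore a self-contained derivation of one imported result from the other imported result, Proposition~\ref{lem:Kar_lem2}, and it holds up. The key points all check out: since $\delta_2 < \delta_1$, the term $c_3 r_2(l)$ eventually dominates $c_4 r_1(l)$, so a $T_0$ with $0 < \tfrac{c_3}{2} r_2(l) \le c_3 r_2(l) - c_4 r_1(l) < 1$ for $l \ge T_0$ exists and gives $\phi(l) \in (0,1)$ there; the homogeneous product then decays like $\exp\bigl(-C\,t^{1-\delta_2}\bigr)$, which is super-polynomial because $\delta_2 < 1$; the finitely many transient factors (which can exceed $1$, and can even be negative if $c_3 c_2$ is large, so the bookkeeping should be done with absolute values) contribute only a fixed constant; and your tail bound is exactly the forced response of the auxiliary single-power system $u_{t+1} = \bigl(1 - \tfrac{c_3}{2} r_2(t)\bigr) u_t + c_5 r_1(t)$ with $u_{T_0} = 0$, to which the second part of Proposition~\ref{lem:Kar_lem2} applies with decay exponent $\delta_2 \in (0,1)$ and forcing exponent $\delta_1 > \delta_2$, yielding precisely the window $0 \le \delta_0 < \delta_1 - \delta_2$ claimed. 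This comparison-to-a-single-power-system reduction is, in substance, how the cited reference argues as well, so your route is the natural one; what it buys in the present context is making the paper self-contained modulo Proposition~\ref{lem:Kar_lem2}, which the paper also states without proof.
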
 

Now by using the results above we introduce the following lemma
which proves the convergence of $\gmo{t}$ and $\gmt{t}$ introduced in \eqref{eq:Gam1} and \eqref{eq:Gam2}.
\begin{lemma}   
\label{lem:convergence} 
The system in \eqref{eq:Gam1} and \eqref{eq:Gam2} satisfies 
    \begin{equation} \label{lem2_2shw1}
        \lim_{\tti} (t+1)^{\delta_0} \gmo{t} = 0 
    \end{equation} 
    \begin{equation} \label{lem2_2shw2}
        \lim_{\tti} (t+1)^{\delta_0} \gmt{t} = 0 
    \end{equation}
    where $0 \leq \delta_0 < \alpha_1 - \mu_1$
\end{lemma}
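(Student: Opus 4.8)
The plan is to treat \eqref{eq:Gam1}--\eqref{eq:Gam2} as a two-dimensional, nonnegative, linear time-varying system and to recover the decay rates by reducing it, in stages, to the scalar recursions already covered by Propositions \ref{lem:Kar_lem2} and \ref{lem:Kar_lem4}. The whole argument hinges on the separation of time-scales $\mu_1 < \alpha_1 < \theta_1$ together with the exponentially small driving term $c_2 \eta^t$. I would proceed in four conceptual steps: (i) establish nonnegativity of $\gmo{t}$ and $\gmt{t}$; (ii) prove joint boundedness via a weighted Lyapunov sum; (iii) use boundedness of $\gmt{t}$ to obtain the decay of $\gmo{t}$ from Proposition \ref{lem:Kar_lem4}; and (iv) feed the decay of $\gmo{t}$ back into the $\gmt{t}$-recursion and close using Proposition \ref{lem:Kar_lem2}.

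For nonnegativity, note that $\gmo{0}=0$ and $\gmt{0}=\Theta \geq 0$, that all off-diagonal and additive forcing terms in \eqref{eq:Gam1}--\eqref{eq:Gam2} are nonnegative, and that the diagonal coefficients $1 - c_1\mu(t) + (1+\sqrt{N})\alpha(t)$ and $1 - \alpha(t)(1-2s)$ lie in $[0,1)$ once $t$ is large (since $\mu(t),\alpha(t)\to 0$ and the initialization bound on $\mu_0$ keeps $c_1\mu(t)<1$), so induction gives $\gmo{t},\gmt{t}\geq 0$. For boundedness I would fix $\kappa > (1+\sqrt{N})/(1-2s)$, which is possible because $s<\tfrac12$, and set $V_t = \gmo{t} + \kappa\,\gmt{t}$. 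A direct computation shows that for $t$ large $V_{t+1} \leq \big(1 - c'\alpha(t)\big)V_t + c_2\eta^t + \kappa/(1+t)^{\theta_1}$ with $c' = (1-2s) - (1+\sqrt{N})/\kappa > 0$; here the slower $\alpha$-contraction dominates the $\mu$-contraction of the $\gmo{}$-row precisely because $\mu_1 < \alpha_1$. Since the forcing decays strictly faster than the contraction ($\theta_1 > \alpha_1$, and $\eta^t$ is super-polynomial), Proposition \ref{lem:Kar_lem2} applies with $\delta_2 = \theta_1 > \delta_1 = \alpha_1$ (by comparison, as $V_t$ obeys an inequality), so $V_t$ is bounded and hence both $\gmo{t}$ and $\gmt{t}$ are bounded.

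With $\gmt{t}\leq\Gamma_2$ bounded, the forcing of \eqref{eq:Gam1} satisfies $(1+\sqrt{N})\alpha(t)\gmt{t} + c_2\eta^t \leq c_5\,\alpha(t)$ for large $t$ (absorbing $c_2\eta^t \leq C\alpha(t)$), so the $\gmo{}$-recursion has exactly the shape of \eqref{lem1}, with $r_2(t)=\mu(t)$ (exponent $\mu_1$) giving the contraction and $r_1(t)=\alpha(t)$ (exponent $\alpha_1$) giving both the additive coefficient term and the forcing. As $\mu_1 < \alpha_1$, Proposition \ref{lem:Kar_lem4} yields $\lim_{\tti}(t+1)^{\delta_0}\gmo{t}=0$ for every $0\leq\delta_0 < \alpha_1-\mu_1$, the bound transferring from the equality system to ours by comparison (legitimate because the system is nonnegative). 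Returning to \eqref{eq:Gam2} with $\gmo{t}\leq C(t+1)^{-\delta_0}$, its forcing $\alpha(t)\gmo{t} + (1+t)^{-\theta_1}$ is $O\big((1+t)^{-(\alpha_1+\delta_0)}\big)+O\big((1+t)^{-\theta_1}\big)$, decaying strictly faster than the contraction $\alpha(t)(1-2s)\sim(1+t)^{-\alpha_1}$ since $\theta_1>\alpha_1$; Proposition \ref{lem:Kar_lem2} (the $\delta_2>\delta_1$ case with $\delta_1=\alpha_1$) then gives $\lim_{\tti}(t+1)^{\delta_0}\gmt{t}=0$ on the same range, establishing \eqref{lem2_2shw1}--\eqref{lem2_2shw2}.

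The main obstacle is the bidirectional coupling: neither scalar Proposition applies to \eqref{eq:Gam1} or \eqref{eq:Gam2} in isolation, since each forcing carries the other variable. The staged bootstrap is what breaks this deadlock, the weighted Lyapunov sum settling boundedness of both variables at once, after which the decay estimates can be propagated one variable at a time. The delicate bookkeeping is to verify, at each stage, that the forcing of the reduced scalar recursion is genuinely dominated by the intended $r_i(t)$, so that the exponent hypotheses of the Propositions hold; this is exactly where the ordering $\mu_1 < \alpha_1 < \theta_1$ and the super-polynomial decay of $c_2\eta^t$ enter, and where nonnegativity is indispensable, as it converts the Propositions (stated for equalities) into the comparison bounds needed for our inequalities.
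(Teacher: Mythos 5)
Your proposal is correct in substance, and its second half --- boundedness of $\gmt{t}$, then Proposition \ref{lem:Kar_lem4} applied to the $\gmo{}$-recursion to get the rate $\alpha_1-\mu_1$, then feeding that rate back into the $\gmt{}$-recursion and closing with Proposition \ref{lem:Kar_lem2} --- is exactly the paper's Steps 5 and 6. Where you genuinely diverge is in how joint boundedness is established. The paper (Steps 1--4) solves the $\gmt{}$-recursion by variation of constants to get $|\gmt{t}| \leq |\gmt{T}| + \sigma_1 \sup_{l\in[T,t]}|\gmo{l}|$, substitutes this into the $\gmo{}$-recursion, and then tames the resulting sup-recursion with two auxiliary systems: a running-max envelope $m(t)$ and a comparison system $\tilde{m}(t)$, arguing the envelope is eventually constant. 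Your weighted sum $V_t = \gmo{t} + \kappa\,\gmt{t}$ with $\kappa > (1+\sqrt{N})/(1-2s)$ replaces all of that with a single contraction estimate: the $\gmt{}$-row contracts at rate $c'\alpha(t)$ by the choice of $\kappa$ (this is where $s<\tfrac12$ enters), and the $\gmo{}$-row absorbs the expansion $\big((1+\sqrt{N})+\kappa\big)\alpha(t)$ into its own contraction $c_1\mu(t)$ because $\mu_1<\alpha_1$ forces $\mu(t)/\alpha(t)\to\infty$. This is shorter, avoids the paper's delicate envelope construction, and even yields a polynomial decay rate for $V_t$ (of order $\theta_1-\alpha_1$ via Proposition \ref{lem:Kar_lem2}) rather than mere boundedness; the paper's route, in exchange, works with absolute values throughout and so never needs any sign information or the choice of $\kappa$.

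One blemish: your nonnegativity induction in step (i) is circular as stated. You invoke diagonal coefficients lying in $[0,1)$ ``once $t$ is large,'' but an induction from $t=0$ needs $1 - c_1\mu(t) + (1+\sqrt{N})\alpha(t) \geq 0$ for \emph{all} $t$, which the stated parameter constraints do not directly guarantee for small $t$ (it amounts to $c_1\mu_0 \lesssim 1$, which would require a separate verification from the initialization bound on $\mu_0$). The clean repair is the paper's device: drop nonnegativity entirely and run your Lyapunov argument on $V_t = |\gmo{t}| + \kappa\,|\gmt{t}|$ from a time $T_0$ beyond which all coefficients are nonnegative; the transient $[0,T_0]$ is finite, so $V_{T_0}<\infty$, and every inequality you wrote survives with absolute values. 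With that repair the argument is complete. (Note also that in your final step, as in the paper's own Step 6, Proposition \ref{lem:Kar_lem2} actually caps the $\gmt{}$ rate at $\min\{\alpha_1-\mu_1,\ \theta_1-\alpha_1\}$, so the claim ``on the same range'' is only literal when $\theta_1-\alpha_1 \geq \alpha_1-\mu_1$; you inherit this overclaim from the statement rather than introduce it.)
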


\begin{proof}
\textit{Step 1 :} As $\alpha(t), \mu(t)$ are decreasing in $t$, and $\alpha_1 > \mu_1$ , there exists a finite $T > 0$ such that for all $t > T$ 
\begin{equation} \label{lem2_1}
\begin{split}
    0 &\leq 1 - (1-2s) \alpha(t) \leq 1 \\ 
    0 &\leq 1 - c_1 \mu(t) + (1+\sqrt{N}) \alpha(t) \leq 1 
\end{split}
\end{equation}
From \eqref{eq:Gam2} we can express $\gmt{t}$ as  
\begin{multline}  \label{gmt_dyn} 
    \gmt{t} = \prod_{\tau=T}^{t-1} (1 - (1 - 2s) \alpha(\tau)) \gmt{T} + \\
     \sum_{\tau=T}^{t-1} \left( \prod_{j=\tau+1}^{t-1} (1 - (1 - 2s) \alpha(j)) \right) \big( \alpha(\tau) \gmo{\tau} + \frac{1}{(1+\tau)^{\theta_1}} \big) .
\end{multline}
Let $s(t) := \sum_{\tau=T}^{t-1} \left( \prod_{j=\tau+1}^{t-1} (1 - (1 - 2s) \alpha(j)) \right) \frac{1}{(1+\tau)^{\theta_1}} $.
Using the second part of Proposition \ref{lem:Kar_lem2}, we obtain : $s(t) \rightarrow 0$ as $t \rightarrow \infty$. Hence, $\exists T > 0$ such that $|s(t)| \leq \gamma_1(T)$ $\forall t \geq T$. Using this, along with \eqref{lem2_1}, in \eqref{gmt_dyn} provides
\begin{equation} \label{gmt_1}
    |\gmt{t}| \leq |\gmt{T}| + \sigma_1 \sup_{l \in [T,t]} |\gmo{l}| 
\end{equation}
for some constant $\sigma_1 > 0$.

\textit{Step 2 :} From \eqref{eq:Gam1} and \eqref{lem2_1} we have 
\begin{equation} \label{gmo_1}
\begin{split}
    |\gmo{t+1}| &\leq \bbc{1 - c_1 \mu(t) + (1+\sqrt{N}) \alpha(t)} \sup_{l \in [T,t]} |\gmo{l}| \\
     &+ (1 + \sqrt{N}) \alpha(t) |\gmt{t}| + c_2 \eta^t .
\end{split}
\end{equation}
Applying \eqref{gmt_1} we get 
\begin{equation*} 
\begin{split}
    \therefore |\gmo{t+1}| &\leq \bbc{1 - c_1 \mu(t) + \sigma_2 \alpha(t)} \sup_{l \in [T,t]} |\gmo{l}| \\
     &+ \sigma_3 \alpha(t) + c_2 \eta^t
\end{split}
\end{equation*}
where $\sigma_2 = (1 + \sqrt{N}) (1+ \sigma_1)$ and $\sigma_3 = (1 + \sqrt{N})  |\gmt{T}| $.
Now as $0 < \eta < 1$, there exists $\sigma_4 > 0$ such that for all $t > 0$
\begin{equation} \label{approx_1}
    \sigma_3 \alpha(t) + c_2 \eta^t < \sigma_4 \alpha(t)
\end{equation} 
\begin{equation*}
    \therefore |\gmo{t+1}| \leq \bbc{1 - c_1 \mu(t) + \sigma_2 \alpha(t)} \sup_{l \in [T,t]} |\gmo{l}| + \sigma_4 \alpha(t)
\end{equation*}
We define a new system - 
\begin{equation} \label{m_dyn}
    m(t+1) = \textrm{max}(m(t), \bbc{1 - c_1 \mu(t) + \sigma_2 \alpha(t)} m(t) + \sigma_4 \alpha(t))
\end{equation}
for all $t > T$ and initial condition $m(T) = \gmo{T}$. So by definition of $m(t)$ we have :
\begin{equation} \label{lem2_2}
    m(t) \geq \sup_{l \in [T,t]} |\gmo{l}| 
\end{equation}
We define another new system : 
\begin{equation} \label{mTld_dyn}
    \tilde{m}(t+1) =  \bbc{1 - c_1 \mu(t) +  \sigma_2 \alpha(t)} \tilde{m}(t) + \sigma_4 \alpha(t))
\end{equation}
for all $t > T$ and initial condition $\tilde{m}(T) = m(T) = \gmo{T}$.
By definition $\tilde{m}(T) \geq 0$. Also for $t > T$, from \eqref{lem2_1} we have $1 - c_1 \mu(t) + \sigma_2 \alpha(t) \geq 0$. Then $\tilde{m}(t) \geq 0$ for all $t \geq T$. 
Now using Proposition \ref{lem:Kar_lem2} and \eqref{mTld_dyn} we have 
\begin{equation*}
    \lim_{\tti} \tilde{m}(t) = 0
\end{equation*} 
\textit{Step 3 :} By virtue of $\tilde{m}(t)$ being a non-negative sequence which converges to 0, there exists a time $T_1 \geq T$ such that $\tilde{m}(T_1 + 1) \leq \tilde{m}(T_1)$. We choose the smallest value among all such possible $T_1 \geq T$. Then from the definition of $T_1$ we have $\tilde{m}(T) < \tilde{m}(T+1) < \hdots < \tilde{m}(T_1)$. So from \eqref{m_dyn}, $m(t) = \tilde{m}(t)$ for all $t \in [T,T_1]$.
\begin{equation} \label{lem2_3}
    \therefore m(t) \leq m(T_1) \text{ , for all } t \in [T,T_1]
\end{equation}
Also by definition of $T_1, m(t)$ we have $m(T_1 +1) = m(T_1)$.
\\ Let for all $t \geq T_1$
\begin{equation*}
    \pi(t) \coloneqq m(T_1) - \bbc{1 - c_1 \mu(t) + \sigma_2 \alpha(t)} m(T_1) - \sigma_4 \alpha(t)
\end{equation*}
By algebraic manipulation  
\begin{equation*}
    \pi(t) = \left( \frac{\sigma_5}{(t+1)^{\mu_1}} - \frac{\sigma_6}{(t+1)^{\alpha_1}}  \right) m(T_1)
\end{equation*} 
where $\sigma_5 = c_1 \mu_0 > 0$ , $\sigma_6 = \left( \sigma_2 + \frac{\sigma_4}{m(T_1)} \right) \alpha_0 > 0$. 
\\ Now $m(T_1) = 0$, and since $m(T_1 +1) = m(T_1)$ we have  
\begin{equation*}
    \pi(T_1) \geq 0 \iff T_1 \geq \left( \frac{\sigma_6}{\sigma_5} \right) ^{1/(\alpha_1 - \mu_1)} - 1
\end{equation*} 
So we have $\pi(t) \geq 0$ for all $t \geq T_1$. Then using \eqref{m_dyn} we have 
\begin{equation} \label{lem2_4}
    m(t) = m(T_1) \text{ , for all } t \geq T_1
\end{equation}
Now combining the results from \eqref{lem2_2}, \eqref{lem2_3} and \eqref{lem2_4} we get 
\begin{equation} \label{gmoMod}
   \sup_{t \geq 0} |\gmo{t}| < \infty
\end{equation}
\textit{Step 4 :} Let $\sup_{t \in [T,t]} |\gmo{t}| = B_1 < \infty$. Then from \eqref{gmt_1} we have 
\begin{equation} \label{lem2_7}
    \sup_{t \geq T} |\gmt{t}| \leq |\gmt{T}| + \sigma_1 B_1 < \infty 
\end{equation}
As $T < \infty$, we have 
\begin{equation} \label{lem2_8}
    \sup_{t \in [0,T]} |\gmt{t}| < \infty
\end{equation}
So combining \eqref{lem2_7} and \eqref{lem2_8} we have 
\begin{equation} \label{gmtMod}
   \sup_{t \geq 0} |\gmt{t}| < \infty
\end{equation}

\textit{Step 5 :} Let $\sup_{t \geq 0} |\gmt{t}| = B_2 < \infty$. 
Then for sufficiently large $t$, from \eqref{gmo_1} we have 
\begin{equation*}
\begin{split}
    |\gmo{t+1}| &\leq |1 - c_1 \mu(t) + (1+\sqrt{N}) \alpha(t)| |\gmo{t}| \\ 
     &+ (1 + \sqrt{N}) \alpha(t) B_2 + c_2 \eta^t
\end{split}
\end{equation*}
Now as $0 < \eta < 1$, there exists $C_\eta > 0$ and $T_\eta > 0$ such that for all $t > T_\eta$
\begin{equation} \label{approx_3}
    (1 + \sqrt{N}) B_2 \alpha(t) + c_2 \eta^t < C_\eta \alpha(t)
\end{equation} 
For a suitable choice of $C_\eta = \sigma_7$, \eqref{approx_3} holds for all $t > 0$. 
\begin{equation} \label{lem4_1}
    \therefore |\gmo{t+1}| \leq |1 - c_1 \mu(t) + (1+\sqrt{N}) \alpha(t)| |\gmo{t}| + \sigma_7 \alpha(t)
\end{equation}
As \eqref{lem4_1} falls under the purview of Proposition \ref{lem:Kar_lem4}, we can infer \eqref{lem2_2shw1}.

\textit{Step 6 :} 
As a consequence of Proposition \ref{lem:Kar_lem4}, there exists $R_1 > 0$ such that $|\gmo{t}| < R_1 / (t+1)^{\delta_0}$ for all $0 \leq \delta_0 < \alpha_1 - \mu_1 $. We choose $\delta_0 \leq \min\{\alpha_1 - \mu_1, \theta_1 - \alpha_1 \}$, which ensures $(1+t)^{-\theta_1} \leq (1+t)^{-(\alpha_1 + \delta_0)}$. Thus for sufficiently large $t$ we have - 
\begin{equation} \label{lem4_2}
    |\gmt{t+1}| \leq \bbc{1 - (1 - 2s) \alpha(t) } |\gmt{t}| + \frac{ \alpha_0 R_1}{(t+1)^{\alpha_1 + \delta_0}} 
\end{equation}  
As \eqref{lem4_2} falls under the purview of Proposition \ref{lem:Kar_lem2}, we have 
\begin{equation*}
    \lim_{\tti} (t+1)^{\delta'_0} \gmt{t} = 0
\end{equation*} 
for all $0 \leq \delta'_0 < \delta_0 $. 
\\ By making $\delta_0$ arbitrarily close to $\alpha_1 - \mu_1$ we get \eqref{lem2_2shw2}.

\end{proof}

Let us define a new matrix $J$ as $J \coloneqq I - \J$. 
\\ The following lemma provides a bound for $\nrm{J - \beta(t)L^\infty}$.  
\begin{lemma} \label{lem:NormBnd}
    Given $c_1 > 0$, $\linf = \bbc{\dout - A} \winf$ where $\winf = \text{diag}\bbc{w^\infty}$, $\beta(t) = \frac{\beta_0}{(1+t)^{\beta_1}}$, $\mu(t) = \frac{\mu_0}{(1+t)^{\mu_1}}$ where $0 < \beta_0 < \psi$, $\mu_0 > 0$ and $0 < \beta_1 < \mu_1 < 1$, there exists  $T > 0$ such that $\nrm{J - \beta(t)\linf} \leq 1 - c_1 \mu(t) < 1$ for all $t \geq T$. 
\end{lemma}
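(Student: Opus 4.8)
The plan is to exploit the balanced structure of $\linf$ to collapse the spectral norm of $J - \beta(t)\linf$ onto the subspace orthogonal to $\one$, and then estimate the resulting restricted norm by a Rayleigh-quotient computation followed by a scalar rate comparison. Since $w^\infty$ balances the digraph, $\linf$ satisfies the balancing condition \eqref{eq:BalWts}, i.e. $\linf\one = \zro$ and $\one^T\linf = \zro^T$. First I would record that the symmetric part $S \coloneqq \linf + \linf^T$ has zero row sums and nonpositive off-diagonal entries, hence is the (weighted) Laplacian of the undirected symmetrization of $\Gamma$; by Assumption \ref{asmp:stCnctvty} this underlying graph is connected, so $S$ is positive semidefinite with null space exactly $\text{span}(\one)$. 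I would then set $\lambda_m \coloneqq \lambda_2(\linf + \linf^T) > 0$, its smallest positive eigenvalue, and $\lambda_M \coloneqq \lambda_{\max}(\linf^T\linf) = \nrmsq{\linf}$.

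Next I would decompose an arbitrary unit vector $x \in \R^N$ as $x = x_\parallel + x_\perp$ with $x_\parallel \in \text{span}(\one)$ and $x_\perp \perp \one$. Writing $M \coloneqq J - \beta(t)\linf$, the right null property $J\one = \zro$ together with $\linf\one = \zro$ gives $M\one = \zro$, while $\one^T J = \zro^T$ (which holds since $\one^T\J = \one^T$) together with $\one^T\linf = \zro^T$ gives $\one^T M = \zro^T$. Thus $\one$ is both a left and a right null vector of $M$, so $M$ is block-diagonal with respect to the orthogonal splitting $\R^N = \text{span}(\one)\oplus\one^\perp$; consequently $\nrm{J - \beta(t)\linf} = \nrm{(I - \beta(t)\linf)|_{\one^\perp}}$, because $J$ acts as the identity on $\one^\perp$. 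It therefore suffices to bound $\nrmsq{(I - \beta(t)\linf)x_\perp}$ over unit vectors $x_\perp \perp \one$. Expanding,
\begin{equation*}
    \nrmsq{(I - \beta(t)\linf)x_\perp} = 1 - \beta(t)\, x_\perp^T S x_\perp + \beta(t)^2\nrmsq{\linf x_\perp},
\end{equation*}
which, using $x_\perp^T S x_\perp \geq \lambda_m$ (the Rayleigh quotient of $S$ on $\one^\perp$) and $\nrmsq{\linf x_\perp} \leq \lambda_M$, is at most $1 - \beta(t)\lambda_m + \beta(t)^2\lambda_M$.

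The crux is then the scalar inequality $1 - \beta(t)\lambda_m + \beta(t)^2\lambda_M \leq (1 - c_1\mu(t))^2$ for all large $t$. Dropping the nonnegative term $c_1^2\mu(t)^2$ on the right, it suffices to show $\beta(t)\bbc{\lambda_m - \beta(t)\lambda_M} \geq 2 c_1 \mu(t)$. Since $\beta(t) \leq \beta_0$, I would lower-bound the left side by $\beta(t)\bbc{\lambda_m - \beta_0\lambda_M}$; substituting $\beta(t) = \beta_0/(1+t)^{\beta_1}$ and $\mu(t) = \mu_0/(1+t)^{\mu_1}$ reduces the claim to $\beta_0\bbc{\lambda_m - \beta_0\lambda_M}(1+t)^{\mu_1 - \beta_1} \geq 2 c_1 \mu_0$. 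The initialization $\mu_0 < (\lambda_m - \beta_0\lambda_M)\beta_0/(2c_1)$ makes this hold at $t = 0$, and because $\mu_1 > \beta_1$ the factor $(1+t)^{\mu_1 - \beta_1}$ is nondecreasing, so it persists for all $t \geq 0$. Finally, choosing $T$ large enough that $1 - c_1\mu(t) \geq 0$ for all $t \geq T$ permits taking square roots, yielding $\nrm{J - \beta(t)\linf} \leq 1 - c_1\mu(t) < 1$, the strict inequality following from $c_1,\mu_0 > 0$.

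The main obstacle I anticipate is twofold. First, making the reduction to $\one^\perp$ fully rigorous requires the block-diagonal argument above, which genuinely needs \emph{both} the left and right null-vector properties of the balanced $\linf$; without the left property $\one^T\linf = \zro^T$ the matrix $M$ would not leave $\one^\perp$ invariant and the norm would not collapse to the restricted one. Second, one must verify that the standing hypothesis $0 < \beta_0 < \psi$, with $\psi \coloneqq 2/\bbc{N\degInMax(\degInMax + \degOutMax)}$, indeed forces $\lambda_m - \beta_0\lambda_M > 0$; this is where $\psi$ enters, as it is calibrated (through graph-theoretic bounds on the extreme eigenvalues $\lambda_m$ and $\lambda_M$ of $\linf$) precisely to guarantee positivity of $\lambda_m - \beta_0\lambda_M$ and hence the solvability of the $\mu_0$ constraint. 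Once these two points are settled, the remaining rate comparison is a routine consequence of $\beta_1 < \mu_1$.
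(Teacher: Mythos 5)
Your proof is correct and follows essentially the same route as the paper's: the same splitting of $\R^N$ into $\text{span}(\one)$ and $\one^\perp$ (the paper via the quadratic form of $(J-\beta(t)\linf)^T(J-\beta(t)\linf)$, you via block-diagonality of $J-\beta(t)\linf$), the same quantities $\lambda_m = \lambda_2(\linf+\linf^T)$ and $\lambda_M = \lambda_{\max}(\linf^T\linf)$ giving $\nrmsq{J-\beta(t)\linf} \leq 1 - \beta(t)\lambda_m + \beta^2(t)\lambda_M$, and the same scalar comparison $2c_1\mu(t) \leq \beta(t)\bbc{\lambda_m - \beta_0\lambda_M}$ made possible by $\beta_0 < \psi$ and $\mu_1 > \beta_1$ (your $\lambda_m - \beta_0\lambda_M$ playing the role of the paper's $\epsilon$). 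The only inessential difference is that you invoke the algorithm's initialization bound on $\mu_0$ to get the inequality from $t=0$, which the lemma does not assume and does not need, since the lemma only asserts existence of $T$ and your own observation that $(1+t)^{\mu_1-\beta_1}$ grows without bound already supplies such a $T$.
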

\begin{proof}
Using the property $\one^T \linf = 0 $ we can write
\begin{equation} \label{lem6_1}
\begin{split} 
    &\nrmsq{J - \beta(t)\linf} = \lambda_{\text{max}} \left( J - \beta(t) M_2 + \beta^2(t) M_3 \right) \\ 
     &= \sup_{x \in \R^N, ||x||=1} x^T \left( J - \beta(t) M_2 + \beta^2(t) M_3 \right) x
\end{split}    
\end{equation}
where $M_2 = \bbc{\linf^T  + \linf }$ and $M_3 = \linf^T \linf$.
Now by definition $\linf \one = 0$. So we have   
\begin{equation} \label{lem6_2}
    M_2 \one = 0 \text{  ;  } M_3 \one = 0
\end{equation}
Also, $M_2$ and $M_3$ are the Laplacians of the corresponding graph. As the graphs are strongly connected, we can infer : 
\begin{equation*} 
    \lambda_2 \bbc{M_2} > 0 \text{  ;  } \lambda_2 \bbc{M_3} > 0 
\end{equation*} 
i.e the 2nd lowest eigen-value of each of the Laplacians is strictly positive.  
\\ Let $x \in \text{span}\{\one\} \equiv x = \alpha \one, \alpha \in \R $. Then using \eqref{lem6_2} we have 
\begin{equation} \label{lem6_3}
    x^T \left( I - \J - \beta(t) M_2 + \beta^2(t) M_3 \right) x = 0
\end{equation}
Now suppose $x \in \text{span}\{\one^\perp \} \equiv x^T \one = 0 $. Then we have  
\begin{equation} \label{lem6_4}
\begin{split}
    x^T \left( I - \J -\beta(t) M_2 + \beta^2(t) M_3 \right) x \\ 
     \leq \bbc{1 - \bbc{\beta(t) \lambda_m - \beta^2(t) \lambda_M}} x^T x
\end{split}
\end{equation}
where $\lambda_m = \lambda_2 (M_2)$ and $\lambda_M = \lambda_\text{max} (M_3)$. 
\\ Having $\beta_0 < \psi$ and $\beta_1 < 1$ ensures that $\beta(t) < \lambda_m / \lambda_M$, and in turn $\nrmsq{I - \J - \beta(t) \linf } < 1$ $\forall t \geq 0$.
We choose an $\epsilon$ such that $0 < \epsilon \leq \lambda_m - \beta(t) \lambda_M$. Then for all $t \geq 0$
\begin{equation} \label{lem6_5}
    \beta(t) \lambda_m - \beta^2(t) \lambda_M \geq \beta(t) \epsilon > 0
\end{equation}
Then from \eqref{lem6_1}, \eqref{lem6_3}, \eqref{lem6_4} and \eqref{lem6_5} we have  
\begin{equation} \label{eq:NormBnd_1}
    \nrm{J - \beta(t) \linf} \leq \sqrt{1 - \beta(t) \epsilon} < 1
\end{equation}
Now, as $\mu_1 > \beta_1$, there exists time $T > 0$ such that for all $t > T$ 
\begin{equation*}
\begin{split}
    \frac{1}{(1+t)^{\mu_1 - \beta_1}} \leq \frac{\epsilon \beta_0}{2 c_1 \mu_0}  
    \implies 2 c_1 \mu(t) \leq \epsilon \beta(t) \\ 
    \implies 1 - \epsilon \beta(t) \leq 1 - 2 c_1 \mu(t) + c_1^2 \mu^2(t)
\end{split}
\end{equation*} 
\begin{equation} \label{eq:NormBnd_2}
    \therefore \sqrt{1 - \epsilon \beta(t)} \leq 1 - c_1 \mu(t)
\end{equation} 
Also as $c_1 > 0$, we have $1 - c_1 \mu(t) < 1$. 
Then from \eqref{eq:NormBnd_1} and \eqref{eq:NormBnd_2} we have 

$\nrm{J - \beta(t) \linf} \leq 1 - c_1 \mu(t) < 1 \text{ , } t \geq T$
\end{proof} 

\section{} \label{sec:Apdx_Thm}
\begin{proof} [Proof of Theorem \ref{thm}] Let $\Bar{x}(t) \in \R^{1 \times M}$ denote the average of the states of the agents : $\Bar{x}(t) = (1/N)\one^T x(t)$. 

\noindent We define $p(t)$ as the difference between the state of the agents and their average, and $q(t)$ as the difference between the average value and the unknown parameter $\ts{t}$.
\begin{align} \label{eq:par_pq}
    p(t) &\coloneqq x(t) - \one \Bar{x}(t) , p(t) \in \R^{N \times M} \\ 
    q(t) &\coloneqq \Bar{x}^T(t) - \ts{t} , q(t) \in \R^M.
\end{align}
Now the norm of the difference between the state of each agent and $\ts{t}$ can be upper bounded as 
\begin{equation} \label{2shw_1}
\begin{split}
    \nrm{x_i(t) - \ts{t}} &\leq \nrm{x_i(t) - \Bar{x}^T(t)} + \nrm{\Bar{x}^T(t) - \ts{t}} \\ 
    \implies \nrm{x_i(t) - \ts{t}} &\leq \nrm{p(t)} + \nrm{q(t)} \text{ for all } i \in \node .
\end{split}
\end{equation} 
To show that the states converge to $\ts{t}$, we express $\nrm{x_i(t) - \ts{t}} \leq \gamma(t)$ and show that $\gamma(t)$ has a dynamics that converges to 0. In what follows, we firstly we use the method of induction to show that 
 \begin{equation} \label{ind_fnl}
     \nrm{p(t)} \leq \gamma_1(t) \text{ and } \nrm{q(t)} \leq \gamma_2(t) \text{ for all } t \geq 0
 \end{equation} 
 and in the process also define the dynamics of $\gmo{t}$ and $\gmt{t}$. After that we express these dynamics as a linear time-varying system which is asymptotically stable. From there, using \eqref{2shw_1}, \eqref{eq:Gam} and \eqref{ind_fnl}  we arrive at our desired result.

\textbf{Dynamics of $x(t) - \one \Bar{x}(t)$ : } 

$p(t+1) = x(t+1) - \one \Bar{x}(t+1) = J x(t+1)$
\begin{equation} \label{p_dyn}
    \implies p(t+1) = M_1 + \alpha(t)JK(t)(y(t) - x(t))
\end{equation} 
where $J \coloneqq I - \J$, and $M_1 \coloneqq J(I - \beta(t)L(t))x(t)$. 
Expanding $M_1$ and using $\one^T L(t) = 0$ and $J \one = 0$, followed by applying norm and its properties of triangle-inequality and sub-multiplicativity we get
\begin{align} \label{m1_nrm1} 
    \nrm{M_1} &\leq \nrm{(J - \beta(t)L_\infty)} \nrm{p(t)} + \beta(t) \big( \nrm{L(t) \one} \nrm{q(t)} \nonumber \\
     &+ \nrm{(L(t) - L_\infty)} \nrm{p(t)} + \nrm{L(t) \one} \nrm{\ts{t}} \big)
\end{align}
Now applying Lemmas \ref{lem:Lt_Linf} and \ref{lem:NormBnd}, and Assumption-\ref{asmp:thtaBnd} in \eqref{m1_nrm1} : 
\begin{equation} \label{m1_nrm2}
    \nrm{M_1} \leq  (1 - c_1 \mu(t)) \nrm{p(t)}
     + C \beta(t) \eta^t \big( \nrm{p(t)} + \nrm{q(t)} +1 + \Theta \big).
\end{equation}
Applying norm to \eqref{p_dyn} and using \eqref{eq:par_Kt}, $\nrm{J}=\one$ we get 
\begin{align} \label{p_nrm2}
    \nrm{p(t+1)} &\leq (1 - c_1 \mu(t) + C \beta(t) \eta^t) \nrm{p(t)}   \\ \nonumber
     &+ C \beta(t) \eta^t \big( 1 + \Theta + \nrm{q(t)} \big) + \sqrt{N} \alpha(t) \gamma(t).
\end{align}

\textbf{Dynamics of $ \Bar{x}^T(t) - \ts{t} $ : } 
\begin{equation} \label{eq:Qt_1}
    \begin{split}
        q(t &+ 1) = \Bar{x}^T(t+1) - \ts{t+1} = \Bar{x}^T(t) - \ts{t} \\
        &+ \frac{\alpha(t)}{N} \one^T K(t) ( y(t) - x(t) ) + \underbrace{\ts{t+1} - \ts{t}}_{\Delta \ts{t+1}}
    \end{split}
\end{equation}
We define two diagonal matrices $\kg(t) , \kb(t)$ where $[\kg (t)]_{ii} \coloneqq k_i(t)$ if $i \in \gd$ and $[\kb (t)]_{ii} \coloneqq k_i(t)$ if $i \in \bd$, and rest of the entries are equal to $0$. 
\begin{equation} \label{eq:KKgKb}
        \therefore \kb(t) + \kg(t) = K(t) \text{  for all } t \geq 0
\end{equation}
Using \eqref{eq:par_yi} and \eqref{eq:par_pq} in \eqref{eq:Qt_1}, followed by applying the $l_2$-norm and its properties of triangle-inequality and sub-multiplicativity we get 
\begin{align} \label{eq:Qt_nrm1}
    \nrm{q(t+1)} \leq \nrm{1 &- \frac{\alpha(t)}{N} \sum_{i \in \gd } k_i(t)} \nrm{q(t)} + \nrm{\Delta \ts{t+1}} \nonumber \\
      &+ \frac{\alpha(t)}{N} \nrm{\sum_{i \in \gd } k_i(t) (x_i(t) - \Bar{x}^T(t) )} \nonumber \\ 
      &+ \frac{\alpha(t)}{N} \nrm{\sum_{i \in \bd } k_i(t) ( y_i(t) - x_i(t) )}   
\end{align} 

\textbf{Dynamics of $\gamma_1(t)$ and $\gamma_2(t)$ via method of Induction : } 
\\ By the method of induction we wish to show \eqref{ind_fnl}, and in the process arrive at the dynamics of $\gamma_1(t)$ and $\gamma_2(t)$.
\\ \textit{Step 1} : at $t = 0$, $\nrm{p(0)} = 0$ as $x(0) = \zro$, and $\nrm{q(0)}=\Theta$ as $\nrm{\ts{t}} < \Theta$. 
Choosing $\gamma_1(0) = 0$, $\gamma_2(0) = \Theta$ we have  
\begin{equation} \label{ind_s1}
    \nrm{p(0)} \leq \gamma_1(0) \text{ , } \nrm{q(0)} \leq \gamma_2(0) 
\end{equation} 
\textit{Step 2} : for some $t>0$ we assume that  
\begin{equation} \label{ind_s2}
    \nrm{p(t)} \leq \gamma_1(t) \text{ , } \nrm{q(t)} \leq \gamma_2(t)
\end{equation}
\textit{Step 3} : based on the assumption \eqref{ind_s2} from Step-2, we need to show that  
\begin{equation} \label{ind_s3}
    \nrm{p(t+1)} \leq \gamma_1(t+1) \text{ , } \nrm{q(t+1)} \leq \gamma_2(t+1)
\end{equation}
Applying \eqref{ind_s2} to \eqref{p_nrm2} and using \eqref{eq:Gam} we have  
\begin{align*}
    \nrm{p(t &+ 1)} \leq  (1 - c_1 \mu(t) + C\beta(t)\eta^t + \sqrt{N} \alpha(t)) \gamma_1(t) \\
     &+ (C \beta(t) \eta^t + \sqrt{N} \alpha(t) ) \gamma_2(t) + C (1 + \Theta) \beta(t) \eta^t
\end{align*}
Now as $\eta < 1$ , there exists $c_2 > 0$ and $T > 0$ such that for all $t > T$ 
\begin{equation} \label{approx_2}
    C (1 + \Theta) \beta(t) \eta^t \leq c_2 \eta^t \text{ , and  } 
    C \beta(t) \eta^t \leq \alpha(t)
\end{equation} 
By appropriate choice of $\beta_0 < \frac{\alpha_0}{C}  $, $c_2 > C (1 + \Theta) \beta_0 $ and $\mu_0 < (\lambda_m - \beta_0 \lambda_M) \beta_0 / (2 c_1)$, \eqref{approx_2} and \eqref{eq:NormBnd_2} holds for all $t > 0$
\begin{equation}\label{p_nrm3} 
\begin{split}
    \therefore  \nrm{p(t+1)} \leq (1 - c_1 \mu(t) + (1+\sqrt{N}) \alpha(t)) \gamma_1(t) \\
    + (1 + \sqrt{N}) \alpha(t) \gamma_2(t) + c_2 \eta^t
\end{split}
\end{equation}
We define the dynamics of $\gmo{t}$ as -
\begin{equation} \label{Gam1}
\begin{split}
    \gmo{t+1} \coloneqq (1 - c_1 \mu(t) + (1+\sqrt{N}) \alpha(t)) \gmo{t} \\ 
    + (1 + \sqrt{N}) \alpha(t) \gmt{t} + c_2 \eta^t 
\end{split}
\end{equation} 
Using \eqref{eq:Gam}, \eqref{2shw_1}, and \eqref{ind_s2}  
\begin{equation} \label{eq:Ki_1}
    k_i(t) = 1 \text{ for all } i \in \gd \text{  [}\because y_i(t) = \ts{t} \forall i \in \gd]
\end{equation} 
Now from  \eqref{eq:KKgKb}, \eqref{eq:Qt_nrm1}, \eqref{eq:Ki_1} and further using \eqref{eq:Gam}, \eqref{eq:thtaVarBnd} and $\alpha_0 < 1/(1 - 2s)$, $s < 1/2$  we get
\begin{equation} \label{q_nrm2}
    \nrm{q(t+1)} \leq (1 - \alpha(t) (1-2s) ) \gmt{t}  + \alpha(t) \gmo{t} + 1 / (1+t)^{\theta_1} 
\end{equation}
We define the dynamics of $\gmt{t}$ as 
\begin{equation} \label{Gam2}
    \gmt{t+1} \coloneqq \alpha(t) \gmo{t} + (1 - \alpha(t) (1-2s) ) \gmt{t} + 1 / (1+t)^{\theta_1} 
\end{equation} 
Then from \eqref{p_nrm3}, \eqref{Gam1} and \eqref{q_nrm2}, \eqref{Gam2} we can infer \eqref{ind_s3}. Thus from steps 1,2 and 3 we have \eqref{ind_fnl}.

\textbf{Asymptotic stability of $\gmo{t}$ and $\gmt{t}$ : } 
\\ Using Lemma \ref{lem:convergence} we can say that a linear time-varying system with state-variables $\gmo{t}$ and $\gmt{t}$, and state dynamics given by \eqref{Gam1} and \eqref{Gam2} respectively, is asymptotically stable, i.e \\  $\lim_{\tti} (t+1)^{\delta_0} \gmo{t} = 0 $, $ \lim_{\tti} (t+1)^{\delta_0} \gmt{t} = 0$.
\end{proof}

\section{} \label{sec:Apdx_psi} 

For our REWB algorithm, the value of a constant $\psi$, which is an upper bound to the parameter $\beta_0$, is defined as $\psi := 2/(N \degInMax (\degInMax + \degOutMax))$. Here we provide a detailed proof of how we arrived at this value of $\psi$.

We have $0 < \beta_0 < \lambda_2(\linf^T + \linf)/\lambda_{\text{max}}(\linf^T \linf)$. Now using Gershgorin's Disk Theorem we can write :
\begin{align*}
    \lambda_{\text{max}} &(\linf^T \linf) \leq 2 \max_i [\linf^T \linf]_{ii} \\ 
     &= 2 \max_i \big( [\linf]^2_{ii} + \sum_{j \in \mathcal{N}_i , j \neq i} [\linf]^2_{ij} \big) \\ 
     &= 2 \max_i \big( (d_i^{\text{out}} w_i^\infty)^2 + \sum_{j \in \mathcal{N}_i , j \neq i} (w_j^\infty)^2 \big) 
\end{align*}
Now using $d_i^{\text{out}} w_i^\infty \leq 1$ from \cite{BalWts}, and $\frac{1}{d_i^{\text{out}}} \leq 1$, we have : 
\begin{equation*}
    \lambda_{\text{max}} (\linf^T \linf) \leq 2 \max_i \big( 1 + \sum_{j \in \mathcal{N}_i , j \neq i} (\frac{1}{d_i^{\text{out}}})^2 \big) \leq 2 d_{\text{max}}^{\text{in}} 
\end{equation*}
From the results in \cite{Mohar}, we can say : $\lambda_2(\linf^T + \linf) > \frac{4}{N(\degInMax + \degOutMax)}$. 
\begin{equation*}
    \therefore \frac{2}{N \degInMax (\degInMax + \degOutMax)} < \frac{\lambda_2(\linf^T + \linf)}{\lambda_{\text{max}}(\linf^T \linf)}
\end{equation*} 
So defining $\psi : = 2/(N \degInMax (\degInMax + \degOutMax))$ and choosing any non-zero positive value of $\beta_0 < \psi$ satisfies the requirement of our REWB algorithm.

\section{} \label{sec:Apdx_Laplacians} 
In the proof for Lemma \ref{lem:NormBnd} in Appendix \ref{sec:Apdx_IL}, we use the fact that the second eigenvalues of the matrices $M_2$ and $M_3$ are non-zero, where $M_2 = \bbc{\linf^T  + \linf }$ and $M_3 = \linf^T \linf$. Here we provide a reasoning for the same. 

\begin{itemize}
\item zero column sum : from \eqref{eq:par_LInf} and \eqref{eq:BalWts} we have $\one^T \linf = \zro, \linf \one = \zro$. Using these, we get
\begin{align*}
    \one^T M_2 = \one^T \linf^T + \one^T \linf = \zro &; \one^T M_3 = \one^T \linf^T \linf = \zro
\end{align*}
\item positive diagonal elements : 
    \begin{align*}
        [M_2]_{ii} &= [\linf^T]_{ii} + [\linf]_{ii} = 2[\linf]_{ii} > 0 \\ 
        [M_3]_{ii} &= [\linf^T]_{i:} [\linf]_{:i} = \sum_{j = 1}^N [\linf]_{ji}^2 > 0 
    \end{align*}
    where $[M]_{ii}, [M]_{i:}, [M]_{:i}$ represent the $(i,i)$-th element, $i$-th row, and $i$-th column of matrix $M$ respectively.
\item non-diagonal elements in $M_2$ : 
    $[M_2]_{ij} = [\linf]_{ji} + [\linf]_{ij}$.
\item non-diagonal elements in $M_3$ : 
\begin{equation*}
    [M_3]_{ij} = [\linf^T]_{i:}[\linf]_{:j} = \sum_{k = 1}^N [\linf]_{ki} [\linf]_{kj}
\end{equation*}
\end{itemize}
From the expression of the entries of $M_2$ and $M_3$, one can deduce that these matrices will have a nonzero entry in the $ij$th position if the digraph has an edge between $i$ and $j$. This shows that the connectivity of the graph corresponding to $\linf$ would be preserved in the new graph corresponding to $M_2$ and $M_3$. Now, as $M_2$ and $M_3$ are valid Laplacians, and their corresponding graphs are connected, we can infer that the second eigenvalues of $M_2$ and $M_3$ are non-zero.

\balance

\printbibliography

@article{KarTAC19, 
    author = "Chen, Yuan and Kar, Soummya and Moura, Jose M. F. ",
    title = "Resilient Distributed Estimation : Sensor Attacks",
    journal = "IEEE Transactions on Automatic Control",
    volume = "64",
    number = "9",
    pages = "3772-3779",
    year = "2019"
}

@article{KarCDC18, 
    author = "Chen, Yuan and Kar, Soummya and Moura, Jose M. F. ",
    title = "Resilient Distributed Estimation : Exponential Convergence under Sensor Attacks",
    journal = "Proceedings for the IEEE Conference on Decision and Control 2018",
    pages = "7275-7282",
    year = "2018"
}

@article{KarSPMag13, 
    author = "Kar, Soummya and Moura, Jose M. F. ",
    title = "Consensus + Innovations Distributed Inference over Networks",
    journal = "IEEE Signal Processing Magazine",
    pages = "99-109",
    year = "2013"
}

@article{KarIT12, 
    author = "Kar, Soummya and Moura, Jose M. F. and Ramanan, Kavita ",
    title = "Distributed Parameter Estimation in Sensor Networks: Nonlinear Observation Models and Imperfect Communication",
    journal = "IEEE Transactions on Information Theory",
    volume = "58",
    number = "6",
    pages = "3575-3605",
    year = "2012"
}

@book{Hesp,
    title = {Linear Systems Theory},
    author = {João P. Hespanha},
    isbn = {9780691179575},
    year = {2018},
    publisher = {Princeton University Press}
}

@INPROCEEDINGS{BalWts,
  author={A. {Makhdoumi} and A. {Ozdaglar}},
  booktitle={2015 54th IEEE Conference on Decision and Control (CDC)}, 
  title={Graph balancing for distributed subgradient methods over directed graphs}, 
  year={2015},
  volume={},
  number={},
  pages={1364-1371},
  %doi={10.1109/CDC.2015.7402401}
  }

@article{AppAgri,
title = {Application of wireless sensor networks in the field of irrigation: A review},
journal = {Computers and Electronics in Agriculture},
volume = {179},
pages = {105782},
year = {2020},
%issn = {0168-1699},
%doi = {https://doi.org/10.1016/j.compag.2020.105782},
author = {Loubna {Hamami} and Bouchaib {Nassereddine}}
}

@article{AppREnrg,
title = {A survey on wireless sensor networks for smart grid},
journal = {Computer Communications},
volume = {71},
pages = {22-33},
year = {2015},
%issn = {0140-3664},
%doi = {https://doi.org/10.1016/j.comcom.2015.09.006},
author = {Etimad Fadel and V.C. Gungor and Laila Nassef and Nadine Akkari and M.G. Abbas Malik and Suleiman Almasri and Ian F. Akyildiz}
}

@article{AppHlth,
title = {Health risk assessment and decision-making for patient monitoring and decision-support using Wireless Body Sensor Networks},
journal = {Information Fusion},
volume = {47},
pages = {10-22},
year = {2019},
%issn = {1566-2535},
%doi = {https://doi.org/10.1016/j.inffus.2018.06.008},
author = {Carol Habib and Abdallah Makhoul and Rony Darazi and Raphaël Couturier}
}

@ARTICLE{SenSrvey,
  author={I. F. {Akyildiz} and  {Weilian Su} and Y. {Sankarasubramaniam} and E. {Cayirci}},
  journal={IEEE Communications Magazine}, 
  title={A survey on sensor networks}, 
  year={2002},
  volume={40},
  number={8},
  pages={102-114},
  %doi={10.1109/MCOM.2002.1024422}
  }

@article{WSNSrvey,
title = {Wireless sensor networks: a survey},
journal = {Computer Networks},
volume = {38},
number = {4},
pages = {393-422},
year = {2002},
%issn = {1389-1286},
%doi = {https://doi.org/10.1016/S1389-1286(01)00302-4},
author = {I.F. {Akyildiz} and W. {Su} and Y. {Sankarasubramaniam} and E. {Cayirci}}
}

@article{DEstiSrvey,
title = {Distributed estimation over a low-cost sensor network: A Review of state-of-the-art},
journal = {Information Fusion},
volume = {54},
pages = {21-43},
year = {2020},
%issn = {1566-2535},
%doi = {https://doi.org/10.1016/j.inffus.2019.06.026},
author = {Shaoming {He} and Hyo-Sang {Shin} and Shuoyuan {Xu} and Antonios {Tsourdos}}
}

@article{Dbji_Atc17,
title = {Resilient consensus of second-order agent networks: Asynchronous update rules with delays},
journal = {Automatica},
volume = {81},
pages = {123-132},
year = {2017},
%issn = {0005-1098},
%doi = {https://doi.org/10.1016/j.automatica.2017.03.008},
author = {Seyed Mehran {Dibaji} and Hideaki {Ishii}}
}

@INPROCEEDINGS{MCA,
  author={H. {Zhang} and S. {Sundaram}},
  booktitle={2012 50th Annual Allerton Conference on Communication, Control, and Computing (Allerton)}, 
  title={A simple median-based resilient consensus algorithm}, 
  year={2012},
  pages={1734-1741},
  %doi={10.1109/Allerton.2012.6483431}
  }

@inproceedings{Thrt_Byz,
author = {Vaidya, Nitin H. and Tseng, Lewis and Liang, Guanfeng},
title = {Iterative Approximate Byzantine Consensus in Arbitrary Directed Graphs},
year = {2012},
%isbn = {9781450314503},
%publisher = {Association for Computing Machinery},
%address = {New York, NY, USA},
%doi = {10.1145/2332432.2332505},
booktitle = {Proceedings of the 2012 ACM Symposium on Principles of Distributed Computing},
pages = {365–374},
%location = {Madeira, Portugal},
%series = {PODC '12}
}

@article{Thrt_Mal,
author = {Yuan {Wang} and Hideaki {Ishii}},
title = {Resilient Consensus Through Event-Based Communication},
journal = {IEEE Transactions on Control of Network Systems},
volume = {7},
number = {1},
pages = {471-482},
year = {2020},
%doi = {https://doi.org/10.1109/TCNS.2019.2924235},
author = {Bo Chen and Li Yu and Daniel W.C. Ho and Wen-An Zhang},
}

@ARTICLE{Security,
  author={ {Haowen Chan} and A. {Perrig}},
  journal={Computer}, 
  title={Security and privacy in sensor networks}, 
  year={2003},
  volume={36},
  number={10},
  pages={103-105},
  %doi={10.1109/MC.2003.1236475}
  }

@INPROCEEDINGS{DiGrph,
  author={A. D. {Domínguez-García} and C. N. {Hadjicostis}},
  booktitle={2011 50th IEEE Conference on Decision and Control and European Control Conference}, 
  title={Distributed strategies for average consensus in directed graphs}, 
  year={2011},
  pages={2124-2129},
  %doi={10.1109/CDC.2011.6160462}
  }

@ARTICLE{SenSpoof_TAC16,  
author={Shoukry, Yasser and Tabuada, Paulo},  journal={IEEE Transactions on Automatic Control},   title={Event-Triggered State Observers for Sparse Sensor Noise/Attacks},   year={2016},  volume={61},  number={8},  pages={2079-2091},  
%doi={10.1109/TAC.2015.2492159}
}

@INPROCEEDINGS{SenSpoof_ICCD20,  
author={Barua, Anomadarshi and Al Faruque, Mohammad Abdullah},  booktitle={2020 IEEE 38th International Conference on Computer Design (ICCD)},   title={Special Session: Noninvasive Sensor-Spoofing Attacks on Embedded and Cyber-Physical Systems},   year={2020},  volume={},  number={},  pages={45-48},  
%doi={10.1109/ICCD50377.2020.00024}
}

@ARTICLE{Meng, 
  author={Meng, Min and Li, Xiuxian and Xiao, Gaoxi},
  journal={IEEE Transactions on Signal and Information Processing over Networks}, 
  title={Distributed Estimation Under Sensor Attacks: Linear and Nonlinear Measurement Models}, 
  year={2021},
  volume={7},
  number={},
  pages={156-165},
  %doi={10.1109/TSIPN.2021.3054981}
  }

@article{Mohar, 
title = {Eigenvalues, diameter, and mean distance in graphs}, 
author = {Mohar, Bojan}, 
journal = {Graphs and Combinatorics}, 
volume = {7}, 
page = {53–64}, 
year = {1991} }

\end{document}